\setlist[description]{leftmargin=18pt}
\newtheorem{definition}{Definition}
\newtheorem{remark}{Remark}
    \newcommand{\qed}{\hspace{\stretch{1}$\square$}}
    \newcommand{\proofof}[1]{\noindent\textit{Proof of~#1. }}
    \global\long\def\pull{\mathcal{PULL}}
    \global\long\def\push{\mathcal{PUSH}}
    \newcommand{\consttwo}{N}
    \newcommand{\corelem}{Message Reduction Theorem}
    \newcommand{\corelemma}{the \corelem\xspace}
    \newcommand{\numbits}{\ell}
    \newcommand{\mesnum}{\eta}
    \newcommand{\finaliter}{\tau}
    \newcommand{\clock}{C}
    \newcommand{\clockname}{clock\xspace}
    \newcommand{\consconvprob}{2}
    \newcommand{\consconvtime}{\const \log n}
    \newcommand{\subphasescorelem}{ \frac{\mesnum}{2} }
    \newcommand{\trialbias}{\epsilon_{end}}
    \newcommand{\homemadeloop}{\hspace{.2cm}$|$\hspace{.2cm}}
    \newcommand{\population}{population\xspace}
    \newcommand{\system}{system\xspace}
    \newcommand{\broadcast}{Bit Dissemination\xspace}
    \newcommand{\outputbit}{opinion\xspace}
    \newcommand{\outputbits}{opinions\xspace}
    \newcommand{\inputbit}{input bit\xspace}
    \newcommand{\inputbits}{input bits\xspace}
    \newcommand{\majority}{Majority\xspace}
    \newcommand{\prota}{\textsc{Syn-Generic}\xspace}
    \newcommand{\protb}{\textsc{P}\xspace}
    \newcommand{\protc}{\textsc{Syn-P}\xspace}
    \newcommand{\clocksyncronization}{Clock Synchronization\xspace}
    \newcommand{\majprot}{\textsc{Syn-Phase-Spread}\xspace}
    \newcommand{\majprotgiven}{\textsc{Phase-Spread}\xspace}
    \newcommand{\synclock}{\textsc{Syn-Clock}\xspace}
    \newcommand{\major}{\textsc{maj-consensus}\xspace}
    \newcommand{\bigO}{\mathcal O}
    \newcommand{\mode}{\textsc{maj}}
    \newcommand{\simple}{\textsc{Syn-Simple}\xspace}
    \newcommand{\intermediate}{\textsc{Syn-Intermediate}\xspace}
    \newcommand{\three}{\textsc{Syn-$3$Bits}\xspace}
    \newcommand\NN{\mathbb{N}}
    \newcommand\RR{\mathbb{R}}
    \renewcommand{\varepsilon}{\epsilon}
    \newcommand{\counter}{clock\xspace}
    \newcommand{\counters}{clocks\xspace}
    \global\long\def\typeblack{\mathcal{B}}
    \global\long\def\typewhite{\mathcal{W}}
    \global\long\def\sizeblack{k_1}
    \global\long\def\sizewhite{k_0}
    \global\long\def\bitmaj{b_{maj}}
    \global\long\def\bitfirst{b_{1}}
    \newcommand{\constpower}{\gamma}
    \newcommand{\const}{\gamma}
    \newcommand{\agreementQ}{\const \log n + \const \log \log T}
    \global\long\def\nowblack{\sizeblack^{\left(i\right)}}
    \global\long\def\nowwhite{\sizewhite^{\left(i\right)}}
    \global\long\def\boostblack{\sizeblack^{\left(1\right)}}
    \global\long\def\boostwhite{\sizewhite^{\left(1\right)}}
    \global\long\def\nextblack{\sizeblack^{\left(i+1\right)}}
    \global\long\def\nextwhite{\sizewhite^{\left(i+1\right)}}
    \global\long\def\protocol{\textsc{P}\xspace}
    \newcommand{\emul}{\textsc{Emul(\protocol)}\xspace}
    \newcommand{\independence}{bitwise-independence\xspace}
    \newcommand{\bitwisemodel}{\mathcal{BIT}}
    \global\long\def\runningtime{L_{\protocol} }
    \global\long\def\runningtimethree{L_{\synclock}}
    \newcommand{\emulround}{z}
\title{Minimizing Message Size in Stochastic Communication Patterns:\\
    Fast Self-Stabilizing Protocols with 3~bits\footnote{A preliminary version of this work appears as a 3-pages Brief Announcement in PODC 2016 \cite{BKN16} and as an extended abstract at SODA 2017 \cite{BKN17}.}}
\author{Lucas Boczkowski
    \thanks{
    IRIF, CNRS and University Paris Diderot, Paris, 75013, France. E-mail: {\tt \{Lucas.Boczkowski,Amos.Korman\}@irif.fr}.} 
    \and
    Amos Korman
    \footnotemark[2]
    \and
    Emanuele Natale
    \thanks{
    Max Planck Institute for Informatics, Saarbrücken, 66123 , Germany. E-mail: {\tt  emanuele.natale@mpi-inf.mpg.de}.
    This work has been partly done while the author was visiting the
    Simons Institute for the Theory of Computing. \hfill \break 
    \smallskip
    This work has received funding from the European Research Council (ERC)
    under the European Union's Horizon 2020 research and innovation programme
    (grant agreement No 648032).}
}
\date{}
\begin{document}

\maketitle

\begin{abstract}
    This paper considers the basic $\pull$ model of communication, in which in
    each round, each agent extracts  information from few randomly chosen
    agents. 
    We seek to identify the smallest amount of information revealed in each
    interaction (message size) that nevertheless allows for efficient and
    robust computations of fundamental information dissemination tasks. 
    We focus on the {\em \majority \broadcast} problem that considers a
    population of $n$ agents, with a designated subset of {\em source agents}.
    Each source agent holds an {\em input bit} and each agent holds an {\em
    output bit}. The goal is to let all agents converge their output bits on
    the most frequent input bit of the sources (the {\em majority bit}). Note
    that the particular case of a single source agent corresponds to the
    classical problem of {\em Broadcast} (also termed {\em Rumor Spreading}).
    We concentrate on the severe fault-tolerant context of {\em
    self-stabilization}, in which a correct configuration must be reached
    eventually, despite all agents starting the execution with arbitrary
    initial states. In particular, the specification of who is a source and
    what is its initial input  bit may be set by an adversary.

    We first design a general compiler which can essentially transform any
    self-stabilizing algorithm with a certain property (called ``the {\em
    \independence property}'') that uses $\ell$-bits messages to one that uses
    only $\log \ell$-bits messages, while paying only a small penalty in the
    running time. By applying this compiler recursively we then obtain a
    self-stabilizing {\em Clock Synchronization} protocol, in which  agents
    synchronize their clocks modulo some given integer $T$, within
    $\tilde{\bigO}(\log n\log T)$ rounds w.h.p., and using messages that
    contain $3$ bits only.
     
    We then employ the new Clock Synchronization tool to obtain a
    self-stabilizing \majority \broadcast protocol which converges in
    $\tilde{\bigO}(\log n)$ time, w.h.p., on every initial configuration,
    provided that the ratio of sources supporting the minority opinion is
    bounded away from half. Moreover, this protocol also uses only 3 bits per
    interaction. 
\end{abstract}

\section{Introduction}

\subsection{Background and motivation}\label{sub-background}

Distributed systems composed of  limited agents that  interact in a stochastic
fashion to jointly perform tasks are common in the natural world as well as in
engineered  systems. Examples  include a wide range of insect populations
\cite{HM85}, chemical reaction networks \cite{CCDS14}, and
mobile~sensor~networks \cite{Pop1}. Such systems have been studied in various
disciplines, including biology, physics, computer science and chemistry, while
employing different mathematical and experimental tools. 

From an algorithmic perspective, such complex systems share a number of
computational challenges. Indeed, they  all perform collectively in
dynamically changing environments despite being composed of limited individuals
that communicate through seemingly unpredictable, unreliable, and  restricted
interactions. Recently, there has been significant interest in understanding
the  computational limitations that are inherent to such systems, by
abstracting some of their characteristics as distributed computing models, and
analyzing them algorithmically \cite{Dan2,Pop1,AFJ06,BCN15,Doty,OHK14}.
These models usually consider  agents which are restricted in their memory and
communication capacities, that interact independently and uniformly at random
(u.a.r.). By now, our understanding of the computational power of such models
is rather advanced. However, it is important to note that much of this
progress has been made assuming non-faulty scenarios - a rather strong
assumption when it comes to natural or sensor-based systems. 
For example, to synchronize actions between processors, many known distributed
protocols rely on the assumption that processors know when the protocol is
initiated. However, in systems composed of limited individuals that do not
share a common time notion, and must react to a  dynamically changing
environment, it is often unclear how to achieve such conditions. To have a
better  understanding of such systems, it is  desirable to identify the weakest
computational models that still allow for both efficient as well as robust
computations. 

This paper concentrates on the basic $\pull$ model of communication
\cite{DGH88,DF11a,DGM11,KSSV00}, in which in each round, each agent can
extract (pull) information from few other agents, chosen u.a.r. In the computer
science discipline, this model, as well as its companion $\push$ model, gained
their popularity due to their simplicity and  inherent robustness to different
kinds of faults. Here, focusing more on the context of natural systems, we view
the $\pull$ model as an abstraction for communication in well-mixed scenarios,
where agents can occasionally ``observe'' arbitrary other agents.
%
We aim to identify the minimal model requirements with respect to
achieving basic information dissemination tasks under conditions of increased
uncertainty. 
As many natural systems appear to be more restricted by their communication
abilities than by their memory capacities \cite{Survey,beeping1,emek}, our main
focus is on understanding what can be computed  while revealing as few bits per
interaction as possible\footnote{We note that stochastic communication patterns
such as $\pull$ or $\push$ are  inherently sensitive to congestion issues.
Indeed, in such models it is unclear how to simulate a protocol that uses large
messages while using only small size messages. For example, the straightforward
strategy of breaking a large message into small pieces and sequentially sending
them one after another  does not work, since one typically cannot make sure
that the small messages reach the same destination. Hence, reducing the message
size  may have a profound impact on the running time, and perhaps even on the
solvability of the problem at hand.}.
\medskip

\paragraph{Self-stabilizing  \broadcast.}

Disseminating information from one or several sources to the rest of the
\population is one of the most fundamental building blocks in distributed
computing \cite{CHHKM12,CLP09,DGH88,DF11a,KSSV00}, and an important primitive
in natural systems \cite{FishConsensus,Razin,ManyEyes}. 
Here, we focus on the {\em \majority \broadcast} problem defined as follows. We consider a \population of $n$ agents.
The \population may contain multiple {\em source agents} which are specified by
a designated bit in the state of an agent indicating whether the agent is a
source or not. 
Each source agent holds a binary {\em \inputbit}, however, sources may not
necessarily agree on their \inputbits. In addition, each agent holds a binary
{\em output bit} (also called {\em \outputbit}).
The goal of all agents is to converge their \outputbit on the majority bit
among the initial \inputbits of the sources, termed $\bitmaj$. This problem
aims to capture scenarios in which some individuals view themselves as
informed, but some of these agents could also be wrong, or not up-to-date. Such
situations are common in nature  \cite{COU05,Razin} as well as in man-made
systems. The number of sources is termed $k$. We do not assume that agents know
the value $k$, or that sources know whether they are in the majority or
minority (in terms of their \inputbit). For simplicity, to avoid dealing with
the case that the fraction of the majority \inputbit among sources is
arbitrarily close to that of  the minority \inputbit, we shall guarantee
convergence only when the fraction of source agents holding the majority
\inputbit is bounded away from $1/2$. 

The particular case where we are promised to have $k=1$ is called \broadcast,
for short. In this case we have a single source agent that aims to disseminate
its \inputbit $b$ to the rest of the \population, and there are no other
sources introducing a conflicting opinion. Note that this  problem has been
studied extensively in different models under different names (e.g., {\em
Broadcast} or {\em Rumor Spreading}). 
A classical example of \broadcast considers the  synchronous $\push$/$\pull$
communication model, where $b$ can be propagated from the source  to all other
agents in $\bigO(\log n)$ rounds, by simply letting each uninformed agent copy
it whenever it sees an informed agent \cite{KSSV00}. The correctness of this
protocol heavily relies on the absence of incorrect information held by  agents. Such reliability however may be difficult to achieve in dynamic
or unreliable conditions. 
For example, if the source is sensitive to an unstable environment, it may
change its mind several times before stabilizing to its final \outputbit.
Meanwhile, it may have already invoked several consecutive executions of the
protocol with contradicting initial \outputbits, which may in turn ``infect''
other agents with the wrong \outputbit $1-b$. If agents do not share a common
time notion, it is unclear how to let infected agents distinguish their current
wrong \outputbit from the more ``fresh'', correct \outputbit.  
To address such difficulty, we consider the context of {\em
self-stabilization} \cite{dijkstra}, where agents must converge to a correct
configuration from any initial configuration of states.

\subsection{Technical difficulties and intuition}\label{sec:technical}

Consider the \broadcast problem (where we are guaranteed to have a single source agent). This particular case is already difficult in the self-stabilizing context if we are restricted to use $O(1)$ bits per interaction. As hinted above, a main difficulty lies  in the fact that agents do not necessarily share a common time notion. Indeed, it is easy to see that if all agents share the same clock, then convergence can be achieved in $\bigO(\log n)$ time with high probability (w.h.p.), i.e, with a probability of at least $1 - n^{-\Omega(1)}$, and using two bits per interaction. 
\medskip

\paragraph{Self-stabilizing \broadcast ($k=1$) with 2 bits per interaction, assuming synchronized clocks.}
The source sets her output bit to be her input bit $b$. In addition to communicate its output bit $b_u$, each agent $u$ stores and communicates a {\em certainty} bit~$c_u$. Informally, having a certainty bit equal to 1 indicates that the agent is certain of the correctness of its output bit. The source's certainty bit is always set to 1.  Whenever a non-source agent $v$ observes $u$ and sees the tuple $(b_u,c_u)$, where $c_u=1$,  it copies the output and certainty bits of $u$ ({\em i.e.,} sets $b_v=b_u$ and $c_v=1$). In addition,  all non-source agents count rounds, and reset their certainty bit to~0 simultaneously every $T=\bigO(\log n)$ rounds. The reset allows to get rid of ``old'' output bits
that may result from applying the protocol before the source's output bit has stabilized. This way, from the first time a reset is applied after the source's output bit has stabilized, the correct source's output bit will propagate to all agents within $T$ rounds, w.h.p. 
Note however, that if agents do not share a consistent notion of time they cannot  reset their certainty bit to zero simultaneously. In such cases, it is unclear how to prevent agents that have just reset their certainty bit to 0 from being ``infected'' by ``misleading'' agents, namely, those that have the wrong output bit and certainty bit equal to 1.
\medskip
\paragraph{Self-stabilizing \broadcast ($k=1$) with a single bit per interaction, assuming synchronized clocks.}
Under the assumption that all agents share the same clock, the following trick shows how to obtain convergence  in $\bigO(\log n)$ time and using only a single bit per message, namely, the output bit. As before, the source sets her output bit to be her input bit $b$. Essentially, agents divide time into phases of some prescribed length $T=\bigO(\log n)$, each of them being further subdivided into $2$ subphases of length $T/2$. In the first subphase of each phase, non-source agents are {\em sensitive} to \outputbit $0$. This means that whenever they see a $0$ in the output bit of another agent, they turn their output bit to $0$, but if they see 1 they ignore it. Then, in the second subphase of each phase, they do the opposite, namely they switch their output bit to $1$ as soon as they see a $1$ (see Figure \ref{fig:wheel}). %
Consider the first phase starting after initialization.  If $b=0$ then within one complete subphase $[1,T/2]$, every output bit is $0$ w.h.p., and remains there forever. Otherwise, if $b=1$, when all agents go over a subphase $[T/2+1, T]$ all output bits are set to $1$ w.h.p., and remain $1$ forever. Note that a common time notion is required to achieve correctness.

\begin{figure}[!ht]
    \centering
    \includegraphics[width=0.34\textwidth]{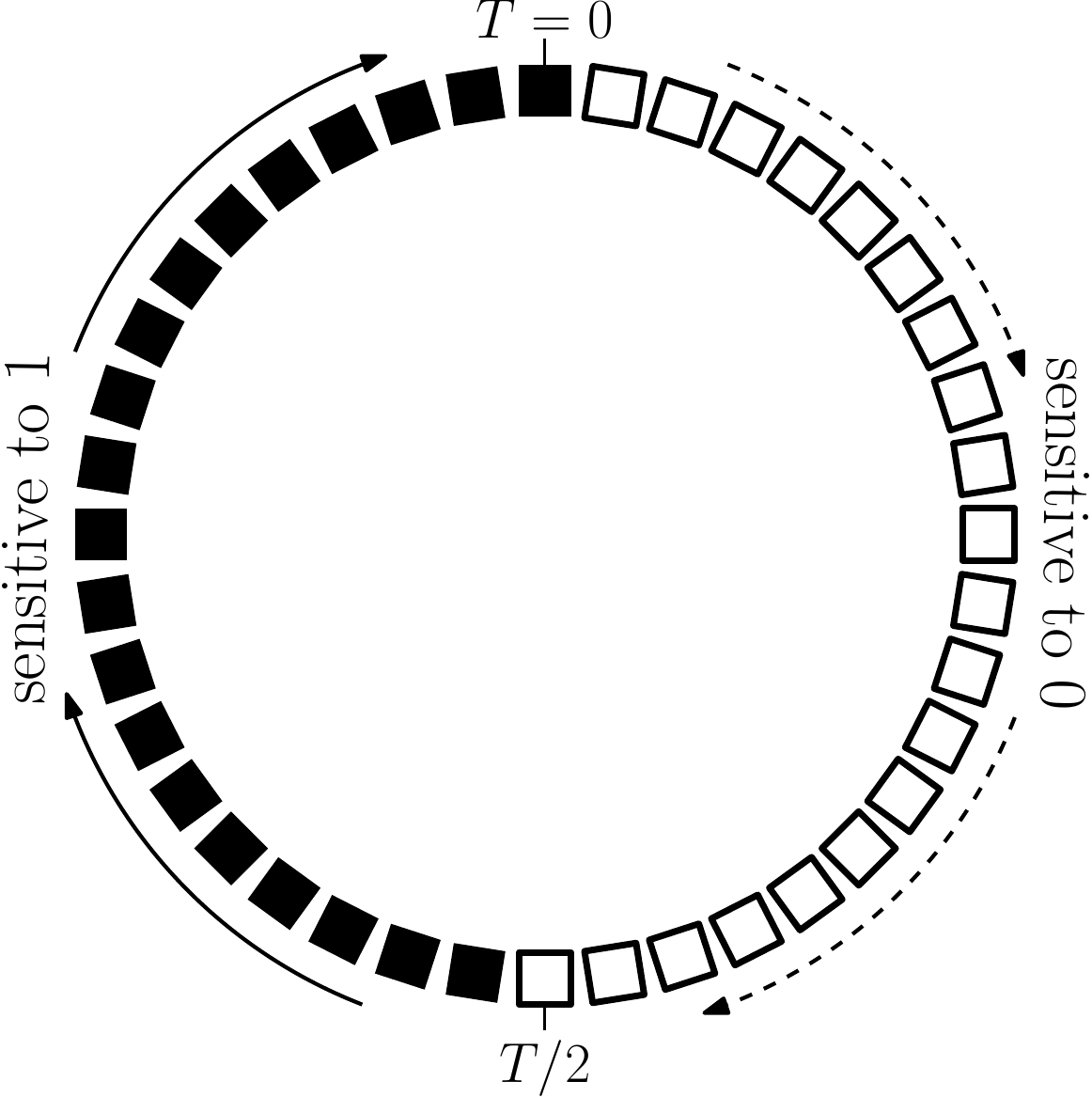}
    \caption{The division in subphases used for self-stabilizing \broadcast with a clock. During the first half, between times $1$ and $T/2$, agents are sensitive to $0$. Then they are sensitive to $1$.} \label{fig:wheel}
\end{figure}
 
The previous protocol indicates that the self-stabilizing \broadcast problem is highly related to the self-stabilizing {\em \clocksyncronization} problem, where each agent internally stores a \counter modulo $T=\bigO(\log n)$  incremented at every round and,  despite having arbitrary initial states, all agents should converge on sharing the same value of the \counter. Indeed, given such a protocol,  one can obtain a self-stabilizing \broadcast protocol by running the Clock Synchronization protocol in parallel to the last example protocol. This parallel execution costs only an additional  bit to the message size and a $\bigO(\log n)$ additive factor to the time complexity over the   complexities of the Clock Synchronization protocol. 
\medskip

\paragraph{Intuition behind the self-stabilizing Clock Synchronization algorithm.}

Our technique for obtaining the Clock Synchronization protocol is based on a
compact recursive use of the stabilizing consensus protocol  proposed by Doerr
et al. \cite{DGM11} through our \corelem{} (Theorem \ref{lem:saving}). 

In the Preliminary section (Section \ref{sec:simple}) we describe a simple
protocol called \simple that uses $\bigO(\log T)$ bits per message. In \simple,
each agent $u$ maintains  a clock $C_u \in [0, T-1]$. At each round, each agent
$u$ displays the \outputbit of her clock, pulls~$2$ other such clock
\outputbits, and updates her clock as the bitwise majority of the two clocks
she pulled and her own. Then the clock $C_u$ is incremented.  This protocol
essentially amounts to running the protocol of Doerr et al. on each bit
separately and in parallel, and self-stabilizes in $\bigO(\log T \log n)$
rounds w.h.p. (Proposition \ref{prop:simple}).

We want to apply a strategy similar to \simple, while using only $\bigO(1)$
many bits per interaction.
The core technical ingredient, made rigorous in \corelemma, is that a certain
class of protocols using messages of $\numbits$ bits, to which \simple belongs,
can be emulated by another protocol which uses $\lceil \log \numbits \rceil +1$
bits only.
The idea is to build a \counter modulo $\numbits$ using \simple itself on
$\lceil \log \numbits \rceil$ bits and sequentially display one bit of the
original $\numbits$-bit message according to such clock. Thus, by applying such
strategy to \simple itself, we use a smaller \counter modulo $\numbits' \ll
\numbits$ to synchronize a \counter modulo $\numbits$. Iterating such process,
in Section \ref{sub:powerof}, we obtain a compact protocol which uses only $3$
bits. 
%

\subsection{The model}
\label{sec:model}
\bigskip
\paragraph{The communication model.}  

We adopt the the synchronous $\pull$ model \cite{BCN16,DGH88}. Specifically, in the $\pull(\mesnum)$ model, communication proceeds in discrete rounds.
In each round, each agent $u$ ``observes'' $\mesnum$ arbitrary other agents, chosen u.a.r.%
\footnote{``u.a.r.'' stands for ``uniformly at random'', with replacement.}
among all agents, including herself.
(We often omit the parameter $\eta$ when it is equal to 2). 
When an agent $u$ ``observes'' another agent $v$, she can peek into a designated {\em visible part} of $v$'s memory. If several agents observe an agent $v$ at the same round then they all see the same visible part. The {\em message size} denotes the  number of bits stored in the visible part of an agent. We denote with $\pull(\mesnum,\numbits)$
the $\pull(\mesnum)$ model with message size $\numbits$. 
We are primarily interested in message size that is independent of $n$, the number of agents.
\medskip

\paragraph{Agents.} 
We assume that agents do not have unique identities, that is, the system is {\em anonymous}. We do not aim to minimize the (non-visible) memory requirement of the agent, yet, we note that our constructions can be implemented with relatively short memory, using $\bigO(\log\log n)$ bits. We assume that each agent internally stores a \counter modulo some integer $T=\bigO(\log n)$, which is incremented at every round.
\medskip

\paragraph{\majority \broadcast problem.}

We assume a \system of $n$ agents each having an internal state that contains an {\em indicator bit} which indicates whether or not the agent is a {\em source}. Each source holds a binary {\em \inputbit}\footnote{
Note that having the indicator bit equal to 1 is equivalent to possessing an input bit: both are exclusive properties of source nodes. However, we keep them distinct for a clearer presentation.
}and each agent (including sources) holds a binary {\em \outputbit}. 
The number of sources (i.e., agents whose indicator bit is $1$) is denoted by $k$. We denote by $\sizewhite$ and $\sizeblack$ the number of sources whose \inputbit is initially set to  $1$ and $0$, respectively.
Assuming $\sizeblack\neq\sizewhite$, we define the \emph{majority bit}, termed $\bitmaj$, as $1$ if $\sizeblack >\sizewhite$ and $0$ if $\sizeblack<\sizewhite$.
Source agents know that they are sources (using the indicator bit) but they do not know whether they hold the majority bit. The parameters $k$, $\sizeblack$ or $\sizewhite$ are not known to the sources or to any other agent.  
It is required that the \outputbits of all agents eventually converge to the majority bit%
$\bitmaj$. 
We note that agents hold their output and indicator bits privately, and we do not require them to necessarily reveal these bits publicly (in their visible parts) unless they wish to. 
To avoid dealing with the cases where the number of sources holding the majority bit is arbitrarily close to $\frac k2$, we shall guarantee correctness (w.h.p.)
only if the fraction of sources holding the majority is bounded away from~$\frac{1}{2}$, i.e., only if $|\frac{\sizeblack}{\sizewhite}-1|>\epsilon$, for some positive constant $\epsilon$.  
When $k=1$, the problem is called {\em \broadcast}, for short. Note that in this  case, the single source agent holds the bit $\bitmaj$ to be disseminated and there is no other source agent introducing a conflicting opinion.
\medskip
 
\paragraph{$T$-\clocksyncronization.}

Let $T$ be an integer. In the {\em $T$-\clocksyncronization} problem, each agent maintains a {\em \counter} modulo $T$ that is incremented at each round. The goal of agents is to converge on having the same value in their \counters modulo $T$. (We may omit the parameter $T$ when it is clear from the context.)
\medskip

\paragraph{Probabilistic self-stabilization and convergence.}

Self-stabilizing protocols are meant to guarantee that  the \system eventually 
converges to a \emph{legal} configuration regardless of the initial states of the
agents \cite{dijkstra}. Here we use a slightly weaker notion, called {\em probabilistic self-stabilization}, where stability is guaranteed w.h.p. \cite{BCN15b}. More formally, for the  \clocksyncronization problem, we assume that {\em all} states are initially set by an adversary. For the  \majority
\broadcast problem, we assume that {\em all} states are initially set by an adversary except that it is assumed 
that the agents know their total number $n$, and that this information is not corrupted.

In the context of $T$-\clocksyncronization, a legal configuration is reached when all clocks show the same time modulo~$T$, and in the \majority \broadcast problem, a legal configuration is reached when all agents output  the  majority bit $\bitmaj$.
Note that in the context of the  \majority 
\broadcast problem, the legality criteria depends on the initial configuration (that may be set by an adversary). That is, the agents must converge their \outputbit on the majority of input bits of sources, as evident in the initial configuration.

The \system is said to {\em stabilize} in $t$ rounds if, from any initial configuration w.h.p.,
within $t$ rounds it reaches a legal configuration and remains legal for at least some polynomial time  \cite{BCN15b,BCN16,DGM11}. In fact, for the self-stabilizing \broadcast problem, if there are no conflicting source agents holding a minority opinion (such as in the case of a single source agent), then our protocols guarantee that once a legal configuration is reached, it remains legal~indefinitely. 
Note that, for any of the problems, we do not require that each agent irrevocably commits to
a final \outputbit but that eventually agents arrive at a 
legal configuration without necessarily being aware of that.

\subsection{Our Results}\label{sec:our-results}

Our main results are the following.

\begin{theorem}\label{thm:multi}
    Fix an arbitrarily small constant $\epsilon>0$. There exists a  protocol, called \majprot, which solves the \majority \broadcast problem in a self-stabilizing manner in $\tilde{\bigO}(\log n)$ rounds%
    \footnote{With a slight abuse of notation, with $\tilde{\bigO}(f(n)g(T))$ we refer to $f(n)g(T) \log^{\bigO(1)}(f(n)) \log^{\bigO(1)}(g(T))$. All logarithms are in base $2$.}
    w.h.p using $3$-bit messages, provided that the majority bit is supported by at least a fraction $\frac 12 + \epsilon$ of the source agents.
\end{theorem}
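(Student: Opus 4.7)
The plan is to define \majprot{} as the composition of the $3$-bit self-stabilizing Clock Synchronization protocol \synclock{} (built in the previous sections via iterated applications of \corelemma) with a specialized majority-spreading subroutine \majprotgiven{}, which assumes that all agents share a clock modulo $T=\mathrm{polylog}(n)$. The protocol \synclock{} reaches a synchronized configuration in $\tilde{\bigO}(\log n)$ rounds w.h.p., and thereafter clocks remain aligned forever since every agent deterministically increments its clock at each round. From that point, agents agree on a global phase, which \majprotgiven{} exploits to drive non-source opinions to $\bitmaj$.

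First I would design and analyze \majprotgiven{} in isolation, assuming a perfectly synchronized clock. Divide the clock cycle into macro-phases of length $\Theta(\log n)$, each split into two ``sensitivity'' subphases. In the first, a non-source agent that pulls an agent with opinion $0$ copies $0$; in the second, it does the same with $1$. Source agents freeze their opinion to their input bit, thus acting as zealots. Adapting the single-source sketch from Section~\ref{sec:technical} to multiple zealots with an $\epsilon$-majority gap, a coupling/concentration argument should show that after each pair of subphases the fraction of non-source agents holding $\bitmaj$ grows by a multiplicative factor bounded away from $1$, so that after $\bigO(\log n)$ macro-phases all non-source opinions coincide with $\bitmaj$ w.h.p. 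This analysis uses only one opinion bit and the source indicator bit.

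The technically delicate step is fitting both components into $3$ visible bits at once. My plan is to multiplex via the clock itself: within each period of length $T$, a first sub-interval is reserved for \synclock{} payload (the only information displayed in those rounds) and a second sub-interval is reserved for \majprotgiven{} payload (the opinion and indicator bits). Since clocks self-stabilize in $\tilde{\bigO}(\log n)$ rounds and then never drift, once the \majprotgiven{} slots begin every agent agrees on how to parse the $3$ visible bits. The protocol \synclock{} does not need to keep running inside the opinion slots because the clock is already stable.

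The main obstacle will be handling the transient period during which clocks have not yet synchronized: an adversary can make agents disagree about which slot they are in, so opinion bits of one agent may be misread as clock bits by another, and vice versa. I would argue that, since \synclock{} is already analyzed to be robust to adversarial initial state in its own visible part, treating arbitrary non-clock bits as part of an arbitrary initial clock state does not change its convergence time by more than a constant factor. Once clocks synchronize, a freshly defined \majprotgiven{} run begins from an effectively adversarial initial opinion configuration, which is precisely the self-stabilizing regime analyzed in the previous step. The total convergence time is therefore the sum of the two, namely $\tilde{\bigO}(\log n)$, as claimed.
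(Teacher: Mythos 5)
Your proposal has two genuine gaps, both of which the paper resolves in a fundamentally different way.

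\paragraph{The sensitivity-subphase mechanism does not extend to $k>1$.}
You propose to reuse the two-subphase ``sensitive to $0$ then sensitive to $1$'' scheme from Section~\ref{sec:technical}, with sources acting as zealots. This works when there is a \emph{single} source precisely because only one of the two opinions has a zealot: if $b=0$, no agent ever sees a $1$ during the second subphase. But with conflicting sources both zealot types are present: during the first subphase every non-source agent sees a $0$-source within $O(\log n)$ rounds and flips to $0$; during the second subphase every non-source agent sees a $1$-source and flips to $1$. The outcome is determined by \emph{which subphase comes last}, not by the sign of $k_1-k_0$. There is no multiplicative drift toward $\bitmaj$ — the phase pair simply resets everyone to the opinion of the final subphase. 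The paper's \majprotgiven{} therefore abandons the zealot-propagation idea entirely and instead runs a boosting/spreading/polling process (in the spirit of~\cite{OHK14}) in which ``silent'' agents adopt the bit of the first ``speaking'' agent they see and the ratio $k_1^{(i)}/k_0^{(i)}$ is shown via Chernoff arguments to be approximately preserved as the speaking set doubles; the final polling phase extracts the majority by sampling. Your sketch has no mechanism that turns the $\epsilon$-gap among sources into a population-wide signal.

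\paragraph{The time-multiplexing into $3$ bits does not have a self-stabilization proof.}
Reserving disjoint sub-intervals of the clock period for \synclock{} payload and for opinion payload assumes agents agree on which slot is current. Before clocks have synchronized, agents disagree, so in the same round some agents display clock bits while others display opinion bits, and every agent sees a mixture of the two. This is not ``an adversarial initial state'' for \synclock{}: the self-stabilization guarantee of \synclock{} assumes the protocol runs uninterruptedly on correctly-formatted messages from round one onward, whereas here the \emph{messages themselves} are being corrupted round after round by desynchronized neighbors — a Byzantine-style perturbation that may affect a constant fraction of agents, far beyond the $O(n^{1/2-\epsilon})$ the protocol can tolerate. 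Moreover, pausing \synclock{} during opinion slots means the clock is not being re-synchronized during those rounds, which breaks the convergence argument for the transient regime. The paper sidesteps all of this: it runs \synclock{} on $3$ dedicated bits and \majprotgiven{} on a $4$th bit \emph{in every round}, proves via Lemma~\ref{lem:ipfact} that the parallel composition has the \independence{} property, and then applies \corelemma{} to compress $4$ bits down to $\lceil\log 4\rceil+1=3$. That last compression step is the key ingredient you are missing; replacing it with ad hoc time-multiplexing breaks the self-stabilization argument precisely where it matters.
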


\noindent Theorem \ref{thm:multi} is proved in Section \ref{sec:majority}.
The core ingredient of \majprot is our construction of an efficient self-stabilizing $T$-\clocksyncronization protocol, which is used as a black-box. For this purpose, the case that interests us  is when  $T=\tilde\bigO(\log n)$. 
Note that in this case, the following theorem, proved in  Section \ref{sec:clock}, states that the convergence time of the Clock Synchronization algorithm is $\tilde{\bigO}(\log n)$.

\begin{theorem}\label{thm:main}
    Let $T$ be an integer. There exists a self-stabilizing $T$-\clocksyncronization protocol, called \synclock, which
    employs only 3-bit messages, and synchronizes \counters modulo~$T$ within $\tilde{\bigO}(\log n \log T)$ rounds w.h.p. 
\end{theorem}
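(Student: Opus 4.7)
\medskip

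\noindent\textbf{Proof plan for Theorem \ref{thm:main}.} The plan is to start from the simple $\bigO(\log T)$-bit protocol \simple of Section \ref{sec:simple} and to compress messages by repeatedly invoking the \corelem{} (Theorem \ref{lem:saving}), which, as explained in Section \ref{sec:technical}, turns a protocol with the \independence property that uses $\numbits$-bit messages into an emulated protocol that uses only $\lceil\log\numbits\rceil+1$ bits, at the cost of a small (polylogarithmic) penalty in the running time. Iterating this reduction roughly $\iterlog T$ times will bring the message size down to $3$ bits while keeping the running time at $\tilde{\bigO}(\log n \log T)$.

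First, I would recall that \simple, by Proposition \ref{prop:simple}, solves $T$-\clocksyncronization with $\bigO(\log T)$-bit messages in $\tilde{\bigO}(\log n \log T)$ rounds w.h.p., and then verify that \simple enjoys the \independence property required by \corelemma: since \simple updates each bit of the \counter via an independent bitwise majority of pulled bits (à la Doerr et al.\ \cite{DGM11}), each bit evolves as the corresponding coordinate in an isolated single-bit sub-protocol, which is exactly what \independence formalizes. This step just requires a careful restatement of the update rule of \simple in the $\bitwisemodel$ formalism.

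Next, I would set up a recursion. Define $\ell_0 = \Theta(\log T)$ and $\ell_{i+1} = \lceil\log \ell_i\rceil + 1$; the number of iterations needed to reach $\ell_i \le 3$ is $\iterlog T + \bigO(1)$. Let $\protocol_0 = \simple$ and inductively let $\protocol_{i+1} = \emul$ applied to $\protocol_i$. By \corelemma{} each $\protocol_{i+1}$ solves $T$-\clocksyncronization with $\ell_{i+1}$-bit messages and a running time that blows up by a $\polylog$ factor over that of $\protocol_i$. Crucially, I would also need to verify that \independence is preserved by the emulation, so that \corelemma{} remains applicable at every level; this is inherent in the construction of \emul, which uses an independent inner \counter to schedule transmission of each bit, leaving the externally visible semantics of the emulated protocol bit-by-bit identical to the original. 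Choosing $\finaliter = \iterlog T + \bigO(1)$ and taking $\protocol = \protocol_\finaliter = \synclock$ then gives a $3$-bit protocol with total running time
\[
\runningtime = \tilde{\bigO}(\log n \log T) \cdot \polylog^{\iterlog T}(\log n \log T) = \tilde{\bigO}(\log n \log T),
\]
since $\iterlog T$ compositions of polylog factors still hide inside the $\tilde{\bigO}$ notation.

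The main obstacle I expect is the bookkeeping in the recursive step: I must show that (i) the \independence property is truly preserved through each application of \emul, so that \corelemma{} applies at every level of the recursion; (ii) the w.h.p.\ guarantees compose correctly over $\iterlog T$ recursive emulations with only a union bound of polynomial size; and (iii) the multiplicative time overhead per iteration is only polylogarithmic in the \emph{current} message size $\ell_i$ and in $n$, so that telescoping $\finaliter = \iterlog T$ emulations still yields $\tilde{\bigO}(\log n \log T)$ overall. Once these three points are discharged by a careful induction on $i$, combining them with the base case furnished by Proposition \ref{prop:simple} yields the claimed $3$-bit, $\tilde{\bigO}(\log n \log T)$-round self-stabilizing $T$-\clocksyncronization protocol.
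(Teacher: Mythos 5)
Your plan reproduces the construction of the intermediate protocol \intermediate (Lemma \ref{lem:threebits}) but stops short of the theorem, and the running-time calculation at the end is not correct. Each application of \corelemma{} to a protocol with message size $\numbits_i$ incurs a \emph{multiplicative} slowdown of $\Theta(\numbits_i)$ (the length of a phase), not merely a polylogarithmic one: the very first application takes \simple, with $\numbits_1 = \Theta(\log T)$ and running time $L_1 = \Theta(\log T \log n)$, to a protocol with running time $\Theta(\numbits_1 L_1) = \Theta(\log^2 T \log n)$. The telescoped product is $\prod_i \numbits_i = \tilde\Theta(\log T)$, so iterating down to $3$ bits yields $\tilde{\bigO}(\log^2 T \log n)$ rounds, not the claimed $\tilde{\bigO}(\log T \log n)$. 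Your identity
\[
\runningtime = \tilde{\bigO}(\log n \log T) \cdot \polylog^{\iterlog T}(\log n \log T) = \tilde{\bigO}(\log n \log T)
\]
hides the dominant $\Theta(\log T)$ factor inside what you call a ``polylog'' factor, but $\log T$ is not absorbed by the $\tilde{\bigO}(\log n \log T)$ notation as the paper defines it (which only allows $\polyloglog$ factors in $n$ and $T$).

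The missing idea is a second level of indirection, which is what the paper does in Section \ref{sub:powerof}: do not apply the iterated reduction directly to a clock modulo $T$. Instead, apply it only to a tiny clock $C'$ modulo $T' = \Theta(\log T \cdot \log n)$ (a power of $2$), which by Lemma \ref{lem:threebits} synchronizes in $\tilde{\bigO}((\log\log T)^2 \log n)$ rounds with $3$ bits. Then use $C'$ as a schedule to synchronize, bit by bit, a \emph{slowly incrementing} counter $Q_{T'}$ modulo $T$ that is incremented once per loop of $C'$: each of its $\lceil\log T\rceil$ bits is agreed upon by running \major for $\Theta(\log n)$ consecutive rounds, costing $\Theta(\log T\log n)$ rounds total. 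Setting $C = (C' + Q_{T'}\cdot T') \bmod T$ gives the desired clock, and one final application of \corelemma{} compresses the extra bit used for the $Q_{T'}$ consensus from $4$ down to $3$ with only a constant slowdown. This construction also removes the power-of-$2$ restriction on $T$, which your approach inherits from \simple (Proposition \ref{prop:simple} is only stated for $T$ a power of $2$) and does not address; in the paper's scheme only $T'$, not $T$, needs to be a power of $2$, because $Q_{T'}$ is an ordinary modular counter rather than a bitwise-cyclic one.
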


\noindent In addition to the self-stabilizing context our protocols can tolerate the presence of Byzantine agents, as long as their number is%
\footnote{Specifically, it is  possible to show that, as a corollary of our
    analysis and the fault-tolerance property of the analysis in \cite{DGM11},
    if $T \leq poly(n)$ then \synclock can tolerate the presence of up to
    $\bigO(n^{1/2 - \varepsilon})$ Byzantine agents for any $\epsilon>0$. In
    addition, \majprot can tolerate
    $\min\{(1-\epsilon)(k_{maj}-k_{min}),n^{1/2-\epsilon}\}$ Byzantine agents,
    where $k_{maj}$ and $k_{min}$ are the number of sources supporting the
    majority and minority opinions, respectively. Note that for the case of a
    single source ($k=1$), no Byzantine agents are allowed; indeed, a single
    Byzantine agent pretending to be the source with the opposite \outputbit can
    clearly ruin any protocol.}
$\bigO(n^{1/2 - \varepsilon})$. 
However, in order to focus on the self-stabilizing aspect of our results, in this work we do not explicitly address the presence of Byzantine agents.

The proofs of both Theorem \ref{thm:main} and Theorem \ref{thm:multi} rely on
recursively applying a new general compiler which can essentially transform
any self-stabilizing algorithm with a certain property (called ``the
\independence property'') that uses $\numbits$-bit messages to one that uses
only $\lceil \log \ell\rceil +1$-bit messages, while paying only a small
penalty in the running time. This compiler is described in Section
\ref{sec:compiler}, in Theorem \ref{lem:saving}, which is also referred as
``\corelemma''. The structure between our different lemmas and results is
summarized in the picture below, Figure \ref{fig:bigscheme}.

It remains an open problem, both for the self-stabilizing \broadcast problem and for the self-stabilizing Clock Synchronization problem,  whether the message size can be reduced to 2 bits or even to 1 bit, while keeping the running time poly-logarithmic. 

\begin{figure*}[!ht]
    \centering
    \includegraphics[width=0.9\textwidth]{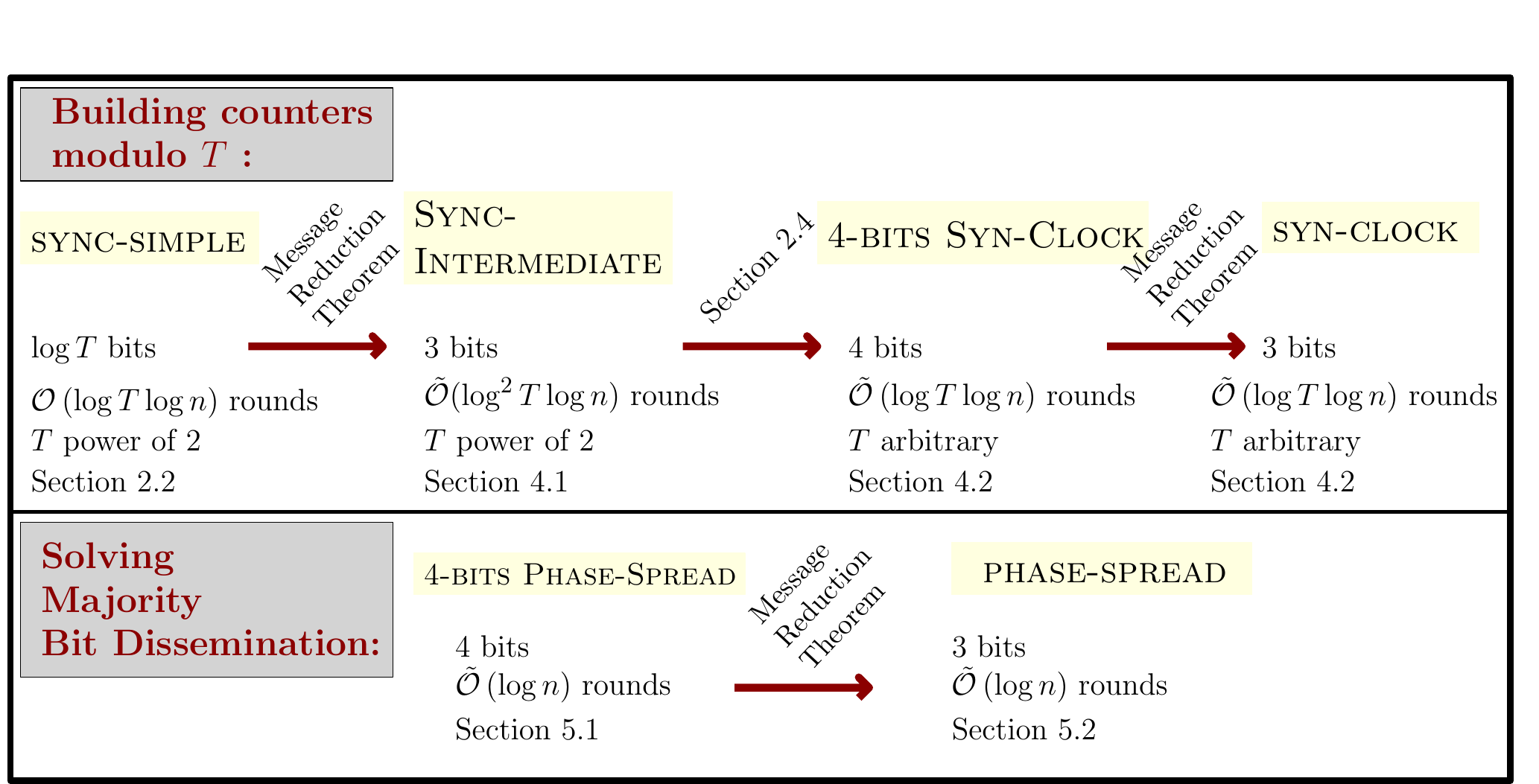}
    \caption{The structure of our arguments. Note that the Message Reduction Theorem  is used on three occasions.} \label{fig:bigscheme}
\end{figure*}

\subsection{Related work}

The computational study of abstract systems composed of simple individuals that interact using highly restricted and stochastic interactions has recently been gaining considerable attention in the community of theoretical computer science. Popular models include {\em population protocols} \cite{Pop1,Pop2,Pop-ss,Joffroy}, which typically consider constant  size individuals that interact in pairs (using constant size messages) in random communication patterns, and the {\em beeping} model \cite{beeping1,emek}, which assumes  a fixed network with extremely restricted communication. Our model also falls in this framework as we consider the $\pull$ model \cite{DGH88, KSSV00, KDG03} with constant size messages. So far, despite interesting works that consider different fault-tolerant contexts \cite{Aspnes,Pop-ss,Joffroy}, most of the progress in this framework considered non-faulty scenarios.

Information dissemination is one of the most well-studied topics in the community of distributed computing, see, {\em e.g.,} \cite{Aspnes,CHHKM12,DGH88,DF11a,DGM11,OHK14,KSSV00}. Classical examples include the {\em Broadcast} (also referred to in the literature as  {\em Rumor Spreading}) problem, in which a piece of information residing at one source agent is to be disseminated to the rest of the \population, and  {\em majority-consensus} (here, called \majority \broadcast) problems in which processors are required to agree on a common output value which is the majority initial input value among all agents~\cite{Aspnes,kutten} or among a set of designated source agents \cite{OHK14}. An extensive amount of research has been dedicated to study such problems in $\push / \pull$ based protocols (including the {\em phone call} model), due to the inherent simplicity and fault-tolerant resilience of such meeting patterns. Indeed, the robustness of $\push / \pull$ based protocols to weak types of faults, such as crashes of messages and/or agents, or to the presence of relatively few Byzantine agents, has been known for quite a while \cite{ES09,KSSV00}. 
Recently, it has been shown that under the $\push$ model, there exist  efficient Broadcast and \majority \broadcast protocols that use a single bit per message and can overcome flips in messages (noise) \cite{OHK14}. The protocols therein, however, heavily rely on the assumption that agents know when the protocol has started. Observe that in a self-stabilizing context, in which the adversary can corrupt the initial \counters setting them to arbitrary times, such an assumption would be difficult to remove while preserving the small message size.

In general, there are only few known self-stabilizing protocols that operate
efficiently under stochastic and capacity restricted interactions. An example,
which is also of high relevance to this paper, is the work of Doerr et al. on
{\em Stabilizing Consensus} \cite{DGM11} operating in the  $\pull$ model. In
that work, each agent initially has a state taken out of a set of $m$
\outputbits and the goal is to converge on one of the proposed states. The
proposed algorithm which runs in logarithmic time is based on sampling the
states of $2$ agents and updating the agent's state to be the median of the $2$
sampled states and the current state of the agent ($3$ \outputbits in total).
Since the total number of possible states is $m$, the number of bits that must
be revealed in each interaction is $\Omega(\log m)$. 
Another example is the plurality consensus protocol in \cite{BCN16}, in which each agent has initially an opinion and we want the system to converge to the most frequent one in the initial configuration of the system.
In fact, the \majority \broadcast problem can be viewed as a generalization of the {\em majority-consensus} problem (i.e. the plurality consensus problem with two opinions), to the case in which multiple agents may initially be unopinionated. In the previous sense, we also contribute to the line of research on the majority-consensus problem  \cite{BCN15, CER15, EFK16}.

Another fundamental building block is Clock Synchronization  \cite{Attiya,Lamport,Lenzen3,Lenzen2}. 
We consider a synchronous system in which clocks tick at the same pace but may
not share the same \outputbit. This version  has earlier been studied in e.g.,
\cite{Ben-Or,Dolev07,Dolev97,Dolev04,SIROCCO,Herman} under different names,
including ``digital Clock Synchronization'' and ``synchronization of
phase-clocks''; We simply use the term ``Clock Synchronization''. There is by
now a substantial line of work on Clock Synchronization problems in a
self-stabilizing context \cite{DolevKLRS13, Dolev04, Lenzen4, Lenzen5}. We note
that in these papers the main focus is on the resilience to Byzantine agents.
The number of rounds and message lengths are also minimized, but typically as a
function of the number of Byzantine processors. Our focus is instead on
minimizing the time and message complexities as much as possible. The authors
in \cite{Lenzen4, Lenzen5} consider mostly a deterministic setting. The
communication model is very different than ours, as every agent gets one
message from every other agent on each round. Moreover, agents are assumed to
have unique identifiers. In contrast, we work in a more restricted, yet
randomized communication setting. In \cite{DolevKLRS13, Lenzen4} randomized
protocols are also investigated. We remark that the first protocol we discuss
\simple (Proposition \ref{prop:simple}), which relies on a known simple
connection between consensus and counting \cite{DolevKLRS13}, already improves
exponentially on the randomized algorithms from \cite{DolevKLRS13, Lenzen4} in
terms of number of rounds, number of memory states, message length and total
amount of communication, in the restricted regime where the resilience
parameter $f$ satisfies $ \log n \leq f \leq \sqrt{n}$. We further note that
the works \cite{Lenzen5,Lenzen4} also use a recursive construction for their
\counters (although very different from the one we use in the proof of
Theorem~\ref{thm:main}). The induction in \cite{Lenzen4} is on the resilience
parameter $f$, the number of agents and the \counter length together.
This idea is improved in \cite{Lenzen5}  to achieve optimality in terms of
resilience to Byzantine agents.

To the best of our knowledge there are no previous works on self-stabilizing Clock Synchronization or Majority Bit Dissemination that aim to minimize the message size beyond logarithmic.

\section{Preliminaries}\label{sec:Preliminaries}

\subsection{A majority based, self-stabilizing protocol for consensus on one bit}

Let us recall%
\footnote{Our protocols will use this protocol as a \emph{black box}. However,
we note that  the constructions we outline are in fact independent of the
choice of consensus protocol, and this protocol could be replaced by other
protocols that achieve similar guarantees.} 
the stabilizing consensus protocol by Doerr et al. in \cite{DGM11}.  
In this protocol, called \major, each agent holds  an \outputbit. In  each
round each agent looks at the \outputbits of two other random agents and
updates her \outputbit  taking the majority among the bits of the observed
agents and its own. 
Note that this protocol uses only a single bit per interaction, namely, the
\outputbit. The usefulness of \major comes from its extremely fast and
fault-tolerant convergence toward an agreement among agents, as given by the
following~result.

\begin{theorem}[Doerr et al. \cite{DGM11}]\label{thm:doerr}
    From any initial configuration,  \major converges to a state in which all
    agents agree on the same output bit in $\bigO(\log n)$ rounds, w.h.p.
    Moreover, if there are at most $\kappa \leq n^{1/2-\varepsilon}$ Byzantine
    agents, for any constant $\epsilon>0$, then after $\bigO(\log n)$ rounds
    all non-Byzantine agents have converged and consensus is maintained for
    $n^{\Omega(1)}$ rounds w.h.p.%
    \footnote{The original statement of \cite{DGM11} says that if at most
        $\kappa \leq \sqrt n$ agents can be corrupted at any round, then convergence
        happens for all but at most $\bigO(\kappa)$ agents. Let us explain how this
        implies the statement we gave, namely that we can replace $\bigO(\kappa)$
        by $\kappa$, if $\kappa \leq n^{\frac{1}{2}-\varepsilon}$. Assume that we
        are in the regime $\kappa \leq n^{\frac{1}{2}-\varepsilon}$. It follows
        from \cite{DGM11} that all but a set of $\bigO(\kappa)$  agents reach
        consensus after $\bigO(\log n)$ round. This set of size $\bigO(\kappa)$
        contains both Byzantine and non Byzantine agents. However, if the number of
        agents holding the minority \outputbit is $\bigO(\kappa ) =
        \bigO(n^{1/2-\varepsilon})$, then the expected number of non Byzantine
        agents that disagree with the majority \emph{at the next round} is in
        expectation $\bigO(\kappa^2 /n) = \bigO(n^{- 2\varepsilon} )$.
        Thus, by Markov's inequality,  this  implies, that at the next round
        consensus is reached among \emph{all non-Byzantine agents} w.h.p. Note also
        that, for the same reasons, the Byzantine agents do not affect any other
        non-Byzantine agent for $n^{\varepsilon} $ rounds w.h.p.}
    \label{thm:maj_consensus}
\end{theorem}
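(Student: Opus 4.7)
The plan is a standard drift analysis for the $3$-median rule that defines \major. Let $x_t$ denote the number of agents holding bit $1$ at round $t$, and set $p_t = x_t/n$. A direct computation of the update rule yields
\[
    \mathbb{E}[p_{t+1}\mid p_t] \;=\; f(p_t) \;:=\; 3 p_t^2 - 2 p_t^3,
\]
an S-shaped map with stable fixed points at $0$ and $1$ and an unstable fixed point at $1/2$, where $f'(1/2) = 3/2$. Consequently, in expectation the gap $|p_t - 1/2|$ is amplified by a constant factor per round whenever $p_t$ is bounded away from the extremes. The whole argument hinges on showing that the actual process tracks this expected behaviour with high probability.

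First I would handle a symmetry-breaking phase: if the configuration starts with $|x_0 - n/2| = o(\sqrt{n})$, the conditional variance of $x_{t+1}$ is $\Omega(n)$, so the random fluctuations alone push $|x_t - n/2|$ to $\Omega(\sqrt{n})$ within a constant number of rounds with constant probability, and a standard repetition argument boosts this to w.h.p. Then I would work with the minority count $y_t = \min(x_t, n - x_t)$ and show, by induction, that as long as $y_t = \Omega(\sqrt{n\log n})$ one has $y_{t+1} \le (1-c)\, y_t$ w.h.p.\ for a fixed constant $c>0$: the expected contraction comes from the derivative of $f$, and concentration follows from a Chernoff bound on the sum of the $n$ independent indicator variables describing each agent's next bit conditioned on the current configuration. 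This reduces $y_t$ to $O(\log n)$ in $O(\log n)$ rounds. Finally, once $y_t$ is logarithmic, $\mathbb{E}[y_{t+1}\mid y_t] = O(y_t^2/n)$, giving a quadratic-drop phase that drives $y_t$ to $0$ in $O(\log\log n)$ additional rounds and yields the overall $O(\log n)$ bound.

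For the Byzantine extension with $\kappa \le n^{1/2-\varepsilon}$ adversarial agents, I would argue that each non-Byzantine agent sees an adversarial sample with probability $O(\kappa/n)$ per observation, so the number of non-Byzantine agents whose update can be ``flipped'' by Byzantine influence in a given round concentrates around $O(\kappa^2/n + \kappa) = O(\kappa)$. This additive perturbation of size $n^{1/2-\varepsilon}$ is dominated by the drift of the main process, so the same potential-function argument above carries through, yielding convergence of all non-Byzantine agents in $O(\log n)$ rounds. Once consensus among non-Byzantine agents is reached, the probability that any specific non-Byzantine agent changes its bit in a subsequent round is $O((\kappa/n)^2) = O(n^{-1-2\varepsilon})$, so a union bound over $n^{\Omega(\varepsilon)}$ rounds and $n$ agents shows that consensus is preserved for $n^{\Omega(1)}$ rounds w.h.p.

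The main obstacle I expect is the transition regime $y_t \in [\sqrt{n},\,\sqrt{n\log n}]$, where the deterministic drift and the random fluctuations are of comparable magnitude. A naive round-by-round Chernoff bound is not strong enough here, and the cleanest fix is probably to analyse a smoothed potential such as $y_t + C\sqrt{y_t}$ and verify that it is a supermartingale throughout this window, or to couple $y_t$ with a biased random walk on $\mathbb{Z}$ whose drift matches the linearisation of $f$ around the current $p_t$.
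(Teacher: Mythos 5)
The paper does not prove this theorem; it is cited from Doerr et al.\ \cite{DGM11}, and the only in-paper argument is the footnote, which bootstraps the original Byzantine guarantee of \cite{DGM11} (convergence of all but $\bigO(\kappa)$ agents when $\kappa \leq \sqrt n$) to convergence of \emph{all} non-Byzantine agents when $\kappa \leq n^{1/2-\varepsilon}$. You instead sketch a full from-scratch drift analysis, which is a genuinely different and far more ambitious route. The drift $f(p)=3p^2-2p^3$ with $f'(1/2)=3/2$ is correct, your phase decomposition (symmetry breaking, contraction, quadratic drop) follows the standard template for this type of rule, and the difficulty you flag in the window $y_t\in[\sqrt n,\sqrt{n\log n}]$ — together with your proposed supermartingale-potential or random-walk-coupling fixes — is exactly where a rigorous proof would have to work. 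So for the Byzantine-free part your plan is plausible in outline, though far from a complete proof.

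For the Byzantine extension there is a concrete gap, and it is precisely the step carried out in the paper's footnote. Writing $y_t$ for the number of non-Byzantine agents holding the minority bit, the Byzantine agents effectively raise the observed minority fraction from $y_t/n$ to $(y_t+\kappa)/n$, so your drift/potential argument gives at best a recursion of the shape $y_{t+1}\leq (1-c)\,y_t + \bigO(\kappa)$ in the contraction phase and $\mathbb{E}[y_{t+1}\mid y_t]=\bigO\bigl((y_t+\kappa)^2/n\bigr)$ in the quadratic phase. The drift alone therefore drives $y_t$ down to $\Theta(\kappa)$ and then stalls; your assertion that the $\bigO(\kappa)$-sized perturbation is ``dominated by the drift'' so that ``the same potential-function argument carries through, yielding convergence of all non-Byzantine agents'' is too strong. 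The missing ingredient is a separate Markov step: once $y_t=\bigO(\kappa)=\bigO(n^{1/2-\varepsilon})$, the expected number of non-Byzantine dissenters at the next round is $\bigO(\kappa^2/n)=\bigO(n^{-2\varepsilon})=o(1)$, hence $y_{t+1}=0$ w.h.p.\ in a single additional round — this is exactly the paper's footnote. You do compute the analogous per-agent flip probability $\bigO((\kappa/n)^2)$ for the \emph{maintenance} phase, but you never apply it to close the last leg of convergence; without that, ``convergence of all non-Byzantine agents in $\bigO(\log n)$ rounds'' does not follow from the drift argument as written.
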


\subsection{Protocol \simple: A simple protocol with many bits per interaction}\label{sec:simple}
We now present a simple self-stabilizing $T$-Clock Synchronization protocol, called  \simple, that uses relatively many bits per message, and relies on the assumption that $T$ is a power of 2. The protocol is based on iteratively applying a self-stabilizing consensus protocol on each bit of the clock separately, and in parallel. 

Formally, each agent $u$ maintains  a clock $C_u \in [0, T-1]$.
At each round,  $u$ displays the \outputbit of her clock $C_u$, pulls~$2$ uniform other such clock \outputbits, and updates her clock as the bitwise majority of the 
two clocks it pulled, and her own.
Subsequently, the clock $C_u$ is incremented.
We present the pseudo code of \simple in Algorithm \ref{alg:simple}. 

\begin{pseudocode}[!ht]
    \textbf{\simple protocol}

    \smallskip
    \begin{description}
        \item[1] $u$ samples two agents $u_1$ and $u_2$.
        \item[2] $u$ updates its clock with the bitwise majority of its clock and those
            of the sample nodes.
        \item[3] $u$ increments its clock by one unit. \label{line:Qincr}
    \end{description}
    \caption{One round of \simple, executed by each agent $u$.}
    \label{alg:simple}
\end{pseudocode}

We prove the correctness of \simple in the next proposition.

\begin{proposition}\label{prop:simple}
    Let $T$ be a power of $2$. The protocol \simple is  a self-stabilizing protocol that uses $\bigO(\log T) $ bits per interaction and synchronizes clocks modulo $T$ in $\bigO(\log T \log n)$ rounds w.h.p. 
\end{proposition}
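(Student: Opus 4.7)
My plan is to prove the two parts of the proposition separately. The message-size claim is immediate: each agent displays only its clock $C_u \in [0, T-1]$, so $\log T$ bits of visible memory suffice (recall $T$ is a power of $2$). For the convergence bound I will argue by induction on the bit position $i \in \{0, 1, \ldots, \log T - 1\}$ that, after $\bigO((i+1)\log n)$ rounds, bits $0, 1, \ldots, i$ of $C_u$ are identical across all agents $u$, with high probability. Synchronizing all $\log T$ bits gives agreement on the full clock, and a union bound over the $\log T$ inductive steps yields the claimed $\bigO(\log T \log n)$ bound. Stability then follows from the fact that the synchronous $+1$ increment preserves equality of clocks, while the bitwise majority of identical values is itself identical.

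The inductive step is the heart of the argument. Suppose bits $0, \ldots, i-1$ have reached consensus by some round $t_i$, with common values $b_0(t), \ldots, b_{i-1}(t)$ at each subsequent round $t \geq t_i$. I examine the evolution of $\beta_u(t)$, the $i$-th bit of $C_u(t)$. The bitwise majority step sets the $i$-th bit of $C_u^{\text{mid}}$ to $\text{maj}\bigl(\beta_u(t), \beta_{u_1}(t), \beta_{u_2}(t)\bigr)$, and the subsequent increment flips this bit exactly when a carry propagates out of bit $i-1$, i.e.\ when $b_0(t) = \cdots = b_{i-1}(t) = 1$. Crucially, since the lower bits are in consensus, this carry $c(t) \in \{0,1\}$ is a deterministic, agent-independent function of $t$, so that
\[ \beta_u(t+1) = \text{maj}\bigl(\beta_u(t), \beta_{u_1}(t), \beta_{u_2}(t)\bigr) \oplus c(t). \]
Using the identity $\text{maj}(x,y,z) \oplus a = \text{maj}(x\oplus a, y\oplus a, z\oplus a)$, the shifted process $\tilde{\beta}_u(t) = \beta_u(t) \oplus \gamma(t)$ with $\gamma(t) = \bigoplus_{s=t_i}^{t-1} c(s)$ obeys exactly the update rule of \major. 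Theorem \ref{thm:doerr} then guarantees that $\tilde{\beta}_u$, and hence $\beta_u$, reaches consensus in $\bigO(\log n)$ additional rounds w.h.p.

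The main obstacle I foresee is precisely the verification that the increment's carry chain does not destroy the consensus structure on higher bits: a priori, incrementing couples all bit positions in a potentially intricate way. The resolution is the observation above that, once the lower bits have converged, the carry into bit $i$ becomes a globally synchronized signal, which is absorbed into an XOR-shift thanks to the symmetry of the majority rule. Some care is also needed with the union bound: each of the $\log T$ invocations of Theorem \ref{thm:doerr} fails with probability at most $n^{-c}$ for any constant $c$ chosen in advance, so the combined failure probability is at most $(\log T)\, n^{-c}$, which remains $n^{-\Omega(1)}$ in the regime of interest here (e.g.\ whenever $T$ is polynomial in $n$).
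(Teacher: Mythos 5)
Your proof is correct and follows essentially the same route as the paper: you induct on bit positions of the clock, observe that once the lower $i$ bits are synchronized the carry into bit $i$ becomes a globally synchronized signal, and then use the self-duality of the $3$-input majority (your identity $\operatorname{maj}(x,y,z)\oplus a=\operatorname{maj}(x\oplus a,y\oplus a,z\oplus a)$) to absorb that signal, reducing the evolution of bit $i$ to a pure \major process covered by Theorem~\ref{thm:doerr}. The paper's argument is the same in substance, phrased as a coupling between \simple and a no-increment variant run on the same interaction pattern (so that $\vec{b}_t=\vec{c}_t$ or $\vec{b}_t=\mathbf{1}-\vec{c}_t$ depending on the parity of the accumulated flips), which is precisely your XOR-shift $\tilde\beta_u(t)=\beta_u(t)\oplus\gamma(t)$ written as two explicit processes rather than one reparametrized process. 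Your explicit union-bound remark about the regime of $T$ is a bit more careful than the paper's implicit treatment, but both arguments silently assume the per-step failure probability can be taken small enough (e.g.\ $n^{-c}$ with $c$ large) that the $\log T$ union bound is harmless in the intended use case $T=\bigO(\log n)$.
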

\begin{proof}
    Let us look at the least significant bit. One round of \simple is
    equivalent to one round of \major with an extra flipping of the \outputbit
    due to the increment of the clock. The crucial point is that all agents
    jointly flip their bit on every round. Because the function agents apply,
    \mode, is symmetric, it commutes with the flipping operation. More
    formally, let $\vec{b}_{t}$ be the vector of the first bits of the clocks
    of the agents at round $t$ under an execution of \simple.
    E.g. $(\vec{b}_{t})_u$ is the value of the less significant bit of node
    $u$'s clock at time $t$.
    Similarly, we denote by $\vec{c}_t$ the first bits of the clocks of the
    agents at round $t$ obtained by running a modified version of \simple in
    which {\em time is not incremented} (i.e. we skip line \ref{line:Qincr} in
    Algorithm \ref{alg:simple}). We couple $\vec{b}$ and $\vec{c}$ trivially,
    by running the two versions on the same interaction pattern (in other
    words, each agent starts with the same memory and pulls the same agents at
    each round in both executions).
    Then, $\vec{b_t}$ is equal to $\vec{c_t}$ when $t$ is even, while is equal to
    $\vec{b_t} = \mathbf{1}-\vec{c_t}$ when $t$ is odd.
    Moreover, we know from Theorem \ref{thm:doerr} that $\vec{c_t}$ converge to
    a stable \outputbit in a self-stabilizing manner. It follows that, from any
    initial configuration of states (i.e. clocks), w.h.p, after $\bigO(\log n)$
    rounds of executing \simple, all agents share the same \outputbit for their
    first bit, and jointly flip it in each round.
    Once agents agree on the first bit, since $T$ is a power of $2$, the
    increment of time makes them flip the second bit \emph{jointly} once every
    $2$ rounds%
    \footnote{To get the feeling of the kind of dependence more significant
    bits have on the less significant ones when $T$ is not a power of $2$
    observe that, for example, if $T=3$ then the first bit takes cyclically the
    values $1$, $0$ and again $0$.}.
    More generally, assuming agents agree on the first $\ell$ bits of their
    clocks, they \emph{jointly} flip the $\ell +1$'st bit once every $2^{\ell}$
    rounds, on top of doing the $\major$ protocol on that bit.
    Hence, the same coupling argument shows that the flipping doesn't affect the
    convergence on bit $\ell +1$. Therefore, $\bigO(\log n)$ rounds after the
    first $\ell$ bits are synchronized, w.h.p. the $\ell+1$'st bit is synchronized as
    well. The result thus follows by induction.
\end{proof}

\subsection{The \independence property}\label{sec:property}

Our general transformer described in Section \ref{sec:compiler} is useful for reducing the message size of protocols with a certain property called \independence. 
Before defining the property we need to define a variant of the $\pull$ model, which we refer to as the $\bitwisemodel$ model. The reason we introduce such a variant is mainly technical, as it appears naturally in our proofs.

Recall that in the $\pull(\mesnum,\ell)$ model, at any given round, each agent $u$ is reading an $\ell$-bit  message $m_{v_j}$ for each of the  $\mesnum$ observed agents $v_j$ chosen u.a.r. (in our case $\mesnum=2$), and then, in turn, $u$  updates her state according to the instructions of a protocol \protocol.
Informally, in the $\bitwisemodel$ model, each agent $u$ also receives $\mesnum$ messages, however, in contrast to the $\pull$ model where each such message corresponds to one observed agent, in the $\bitwisemodel$ model, the $i$'th bit of each such message is received independently from a (typically new) agent, chosen u.a.r. from all agents. 
\begin{definition}[The $\bitwisemodel$ model]\label{def-bits}
  In the $\bitwisemodel$ model,
  at each round, each agent $u$
  picks  $\mesnum\ell$ agents u.a.r., namely, $v_1^{(1)},v_2^{(1)}, \ldots v_{\numbits}^{(1)}$,\ldots,$v_1^{(\mesnum)},v_2^{(\mesnum)}, \ldots v_{\numbits}^{(\mesnum)}$, and reads $\hat{ s_i}^{(j)} = s_i(v_i^{(j)})$, the $i$-th bit of the visible part of agent $v_i^{(j)}$, for every $i \leq \numbits$ and $j \leq \mesnum$.
  For each $j \leq \mesnum$, let $\hat{m_j}(u)$ be the $\ell$-bit string $\hat{m_j}(u):= (\hat{s_1}^{(j)},\hat{s_2}^{(j)},\dots, \hat{s_\ell}^{(j)})$. 
  By a slight abuse of language we call the strings $\{\hat{m_j}(u)\}_{j\leq \mesnum}$
  the \emph{messages} received by $u$ in the $\bitwisemodel$ model.
\end{definition}

\begin{definition}[The $\independence$ property]\label{def:ip}
  Consider a protocol \protocol designed to work in the $\pull$ model. We say that \protocol has the \independence property if its correctness and running time guarantees remain the same, under the $\bitwisemodel$ model (assuming that given the messages $\{\hat{m_j}(u)\}_{j\leq \mesnum}$ it receives at any round, each agent $u$ performs the same actions that it would have, had it received these messages in the $\pull$ model).
\end{definition} 

Let us first state a fact about protocols having the \independence property.
\begin{lemma}\label{lem:ipfact}
    Assume protocol \prota is a protocol synchronizing \counters modulo $T$ for some $T$ and protocol \protb is a protocol which works assuming agents share a clock modulo $T$.
    Denote by \protc the parallel execution of \prota and \protb, with \protb using the clock synchronized by \prota. 
    If \prota and \protb are self-stabilizing then so is \protc, and the
    convergence time of \protc is at most the sum of convergence times of
    \prota and \protb.
%
    Finally, if \prota and \protb have the \independence property, and \protb is also self-stabilizing, \protc has the \independence property too.
\end{lemma}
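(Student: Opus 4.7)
The plan is to handle the two assertions in \protc separately, since both essentially follow by treating \protb as being ``re-initialized'' at the moment \prota has done its job. The critical structural observation is that \prota operates only on its own visible part, so its dynamics are completely oblivious to whatever \protb is doing in parallel; hence analyzing \prota inside \protc is the same as analyzing \prota alone.

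For the convergence claim, let $t_A$ and $t_B$ be the w.h.p.\ convergence times of \prota and \protb. Since \prota ignores \protb, from any initial configuration of \protc the component running \prota reaches a synchronized clock configuration within $t_A$ rounds, w.h.p. At that moment the \protb-part of each agent's state is arbitrary (whatever was produced by running \protb on the previously unsynchronized clocks), but from round $t_A$ onward \protb is executed on a genuinely synchronized clock modulo $T$; hence the situation is indistinguishable from running \protb on a fresh, adversarially chosen initial configuration. By self-stabilization of \protb, within $t_B$ additional rounds \protb reaches a legal configuration, w.h.p. A union bound gives convergence of \protc in at most $t_A+t_B$ rounds.

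For the \independence claim, the plan is to observe that in $\bitwisemodel$, each bit of an agent's visible part is sampled independently from a uniformly random agent (Definition \ref{def-bits}). Since the visible part of \protc decomposes into the visible part of \prota followed by that of \protb, the bits read for \prota in the $\bitwisemodel$ execution of \protc are jointly distributed exactly as they would be in a stand-alone $\bitwisemodel$ execution of \prota, and the same holds for \protb. Thus running \protc in $\bitwisemodel$ is equivalent, round by round, to running \prota and \protb in parallel, each in its own $\bitwisemodel$ execution. Applying the \independence property of \prota gives clock synchronization in $t_A$ rounds w.h.p.; then, exactly as in part (a), the post-synchronization execution of \protb can be viewed as starting from an arbitrary \protb-configuration on a synchronized clock, and the \independence property together with self-stabilization of \protb yields convergence in $t_B$ more rounds. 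The main obstacle, conceptually, is precisely this ``restart'' step for \protb: one must be careful that the state of \protb at round $t_A$, although possibly pathological, is still a valid configuration from which \protb's self-stabilization guarantee (now inherited to $\bitwisemodel$ via its \independence property) applies, and that the sampling in subsequent rounds remains independent of the \prota-history, which is true because the $\bitwisemodel$ samples are drawn afresh every round.
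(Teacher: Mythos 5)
Your proof is correct and follows essentially the same decomposition as the paper's: exploit that \prota runs obliviously to \protb, invoke the \independence property of \prota to get the clocks synchronized, and then treat the \protb-component at that moment as an arbitrary starting configuration from which self-stabilization (together with \protb's own \independence property) takes over. The paper's argument is considerably terser; you spell out the same ``restart'' idea and the round-by-round bit-sampling independence explicitly, which the paper leaves implicit.
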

\begin{proof}
    The self-stabilizing property of \protc and its convergence time easily follows from those of \prota and \protb. 
    
    As for the \independence property, assume we run \protc in the
    $\bitwisemodel$ model. The execution of \prota is carried independently of
    the execution of \protb.
    Since, by hypothesis, \prota has the independence property, eventually all
    agents have a synchronized \counter modulo $T$. Thus, once \counters are
    synchronized, we can disregard the part of the message corresponding to
    \prota, and view the execution of \protc as simply \protb. 
    Therefore, since \protb is self-stabilizing and has the independence
    property, \protc still works in the $\bitwisemodel$ model as in the
    original $\pull$ model.
\end{proof}

\noindent We next show that the protocol \simple has the aforementioned \independence property.

\begin{lemma}\label{lem:simple_ind}
    \simple has the \independence property. 
\end{lemma}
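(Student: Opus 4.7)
The plan is to revisit the inductive, bit-by-bit argument in the proof of Proposition \ref{prop:simple} and verify that each step survives the switch from the $\pull$ model to $\bitwisemodel$. The key observation is that when agent $u$ computes the new value of bit $i$ of its clock in \simple, it applies \mode to its own bit $i$ and to the bit $i$ of the two received messages. By Definition \ref{def-bits}, bit $i$ of $\hat{m}_1(u)$ and bit $i$ of $\hat{m}_2(u)$ are exactly the $i$-th bits of two agents $v_i^{(1)}$ and $v_i^{(2)}$ chosen independently and u.a.r. Hence, \emph{per bit position}, the update rule is precisely an invocation of \major on that single bit with two fresh u.a.r.\ samples --- exactly the hypothesis of Theorem \ref{thm:doerr}.

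I would then rerun the induction of Proposition \ref{prop:simple}. For the least significant bit, the coupling with the ``no-increment'' variant $\vec{c}_t$ gives $\vec{b}_t = \vec{c}_t$ on even rounds and $\vec{b}_t = \mathbf{1}-\vec{c}_t$ on odd rounds, because the increment and \mode commute --- these are local, per-agent operations which are unaffected by the model used for sampling. The no-increment process is then exactly \major on a single bit, so Theorem \ref{thm:doerr} gives synchronization of bit $1$ within $\bigO(\log n)$ rounds w.h.p. Assuming the first $\ell$ bits are synchronized, the joint flipping of bit $\ell+1$ every $2^{\ell}$ rounds is driven by the local clock increment only and is unchanged in $\bitwisemodel$; between joint flips, bit $\ell+1$ evolves under \major applied to two fresh u.a.r.\ samples of that bit, so the same coupling argument yields synchronization of bit $\ell+1$ in $\bigO(\log n)$ further rounds w.h.p. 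Induction closes the argument, giving the same $\bigO(\log T \log n)$ total running time.

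The one point I would argue most carefully --- and which I expect to be the main obstacle to a careless reader --- is that in $\bitwisemodel$ the bits at different positions come from \emph{different} random agents, so the ``message'' received by $u$ does not correspond to any real clock in the system. This would be fatal if the analysis ever needed several bits received from the same sample to be jointly distributed as the bits of a genuine clock. However, the proof of Proposition \ref{prop:simple} is structured precisely to avoid this: it handles one bit at a time and uses only the marginal distribution of bit $i$ across the two samples, together with independence across rounds. The $\bitwisemodel$ model provides exactly this marginal and this independence, so no step of the proof breaks. Consequently, \simple inherits the same correctness and running-time guarantees under $\bitwisemodel$, which is the \independence property.
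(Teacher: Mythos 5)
Your proposal is correct and follows essentially the same approach as the paper: both argue that the bit-by-bit induction from the proof of Proposition~\ref{prop:simple} only ever uses the marginal distribution of a single bit position across the two samples (plus the fact that the joint flip is deterministic once lower bits agree), so it is indifferent to whether the different bit positions come from the same agent or from independently chosen agents as in $\bitwisemodel$. The paper states this observation very tersely; your version makes explicit that the per-position update in $\bitwisemodel$ is literally an invocation of \major with two fresh u.a.r.\ samples, which is a useful clarification but not a different proof.
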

\begin{proof}
    Let $\numbits'$ be the size of the \counters. 
    Assume the first $i<\numbits'$ bits of the \counters have been synchronized. At
    this stage, the $(i+1)$-st bit of each agent $u$ is flipped every $2^{i}$
    rounds and updated as the majority of the $(i+1)$-st bit of $C(u)$ and the $2$
    pulled messages on each round. 
    Since the first $\numbits'$ bits are synchronized, the previous flipping is
    performed by all agents at the same round.
    The thesis follows from the observation that, in order for \simple to work,
    we do not need the bit at index $(i+1)$ to come from the same agent as
    those bits used to synchronize the other indices, as long as convergence on
    the first $i$ bits has been achieved. 
\end{proof}

\section{A General Compiler that Reduces Message Size}\label{sec:compiler}

In this section we present a general compiler that allows to implement a protocol \protocol using $\numbits$-bit messages while using messages of order $\log \numbits$ instead, as long as \protocol enjoys the  \independence property. The compiler is based on replacing a message by an index to a given bit of the message. This tool will repeatedly be used in the following sections to obtain our Clock Synchronization and \majority \broadcast algorithms that use 3-bit messages.

\begin{theorem}[\corelemma]\label{lem:saving}
    Any self-stabilizing protocol \protocol in the $\pull(\mesnum,\ell)$ model having the \independence property, and whose running time is $\runningtime$, can be emulated by a protocol \emul which runs in the $\pull(2,\lceil \log (\subphasescorelem\ell) \rceil + 1)$ model, has running time $\bigO\left(\log (
    \mesnum \ell) \log n + \subphasescorelem \ell \runningtime \right)$ and has itself the \independence property. 
\end{theorem}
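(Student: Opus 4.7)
The plan is to construct \emul as the parallel composition of \simple, used to synchronize a short auxiliary clock, and an emulation of \protocol that piggybacks on that clock. Each agent $u$'s visible message of size $\lceil \log(\mesnum\numbits/2)\rceil + 1$ bits is a pair: the leading $\lceil \log(\mesnum\numbits/2)\rceil$ bits form a clock, and the trailing bit carries a single ``content'' bit. Take the clock modulus $\Lambda := 2^{\lceil \log(\mesnum\numbits/2)\rceil}$, which is a power of $2$ of order $\mesnum\numbits$ so that Proposition \ref{prop:simple} applies. The clock bits are updated exactly as in \simple. The content bit displayed by $u$ in a round with local clock value $c$ is the $f(c)$-th bit of $u$'s internal $\numbits$-bit ``would-be'' \protocol message, where $f:\{0,\dots,\Lambda-1\}\to\{0,\dots,\numbits-1\}$ is a fixed surjection whose preimages are consecutive blocks of size $\Lambda/\numbits \geq \mesnum/2$. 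Each agent accumulates, over a full sweep of the clock (a \emph{super-round} of $\Lambda$ rounds), the content bits it has pulled; at the end of the super-round it applies \protocol's local transition, feeding those accumulated bits as its $\mesnum$ incoming $\numbits$-bit messages and updating its internal \protocol state.

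The correctness analysis splits into two phases. First, by Proposition \ref{prop:simple} and Lemma \ref{lem:simple_ind}, the clock bits synchronize within $\bigO(\log \Lambda \cdot \log n) = \bigO(\log(\mesnum\numbits)\log n)$ rounds w.h.p., regardless of the arbitrary initial states; the extra content bit does not interfere since \simple reads only the clock bits. Once the clocks agree, all agents display the same bit index simultaneously and super-round boundaries become common to the whole population. Within any subsequent super-round, the $2\Lambda$ content bits pulled by a fixed agent $u$ yield, for each bit-position $i \in \{0,\dots,\numbits-1\}$, at least $2\Lambda/\numbits \geq \mesnum$ independent samples of bit $i$, each drawn from a uniform random agent. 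Discarding excess samples and regrouping the rest into $\mesnum$ virtual $\numbits$-bit messages produces an input profile in which each bit is drawn from an independent uniform agent --- precisely the $\bitwisemodel$ access pattern of Definition \ref{def-bits}. Because \protocol has the \independence property, applying \protocol's transition to these virtual messages is statistically indistinguishable from a genuine round of \protocol in the $\pull(\mesnum,\numbits)$ model. Since \protocol self-stabilizes in $\runningtime$ of its own rounds, at most $\runningtime$ super-rounds --- i.e.\ $\Lambda \runningtime = \bigO((\mesnum\numbits/2)\runningtime)$ additional rounds of \emul --- suffice w.h.p.\ to reach a legal configuration, giving the claimed total running time.

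For the \independence property of \emul itself, note that \emul is exactly the parallel composition of \simple (on the clock bits) and of the emulated \protocol (on the content bit, using the clock as a shared notion of time). Lemma \ref{lem:simple_ind} gives the \independence property for \simple, and the emulated \protocol inherits it from \protocol: in the $\bitwisemodel$ model each content bit is by definition sampled from an independent random agent, which is exactly the access pattern that makes the emulation faithful in the first place. Lemma \ref{lem:ipfact} then combines the two to conclude that \emul has the \independence property, which is what permits the theorem to be applied recursively in the constructions of Sections \ref{sec:clock} and \ref{sec:majority}.

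The main obstacle will be the careful bookkeeping that matches the $2\Lambda$ samples gathered across the $\Lambda$ rounds of a super-round to the $\mesnum\numbits$ bits required by a single step of \protocol, while confirming that these samples are pairwise independent uniform draws --- which they are, since in the $\pull$ model the targets are sampled independently across rounds and across the two pulls per round. The minor arithmetic subtlety that $\mesnum\numbits/2$ need not be a power of $2$ is absorbed by rounding $\Lambda$ up, which only costs a constant factor in the super-round length and hence in the final running time. Self-stabilization takes care of the remaining loose end: whatever garbage accumulates in the content component before the clock synchronizes merely plays the role of an arbitrary initial state for the emulated \protocol, which by hypothesis converges from any such state.
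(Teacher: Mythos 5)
Your proposal matches the paper's proof almost step for step: you synchronize a short clock of modulus $\Theta(\mesnum\numbits)$ via \simple, time-multiplex the $\numbits$-bit private message across a super-round so that collected content bits realize exactly the $\bitwisemodel$ access pattern required by the \independence property, and then invoke Lemma~\ref{lem:ipfact} to transfer the property to \emul, yielding the same $\bigO(\log(\mesnum\numbits)\log n + \mesnum\numbits\runningtime/2)$ bound. The only cosmetic deviation is your block-structured display schedule (all rounds for bit $0$, then all for bit $1$, \dots) versus the paper's cyclic schedule (bit index $= c \bmod \numbits$, repeated over $\mesnum/2$ subphases); both deliver at least $\mesnum$ independent samples per bit position and so are interchangeable.
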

\begin{remark}
The only reason for designing \emul to run in the $\pull(2,\lceil \log (\subphasescorelem\ell)  \rceil + 1)$ model in \corelemma is the consensus protocol we adopt, \major, which works in the $\pull(2)$ model. In fact, \emul can be adapted to run in the $\pull(1,\lceil \log (\mesnum\ell)  \rceil + 1)$ model by using a  consensus protocol which works in the $\pull(1)$ model. However, no self-stabilizing binary consensus protocol in the $\pull(1)$ model with the same performances as \major is currently known.
\end{remark}
\proofof{Theorem \ref{lem:saving}}
    Let $s(u) \in \{0,1 \}^{\numbits}$ be the message displayed by an agent $u$
    under \protocol at a given round. For simplicity's sake, in the following
    we assume that $\eta$ is even, the other case is handled similarly. In
    \emul, agent $u$ keeps the message~$s(u)$ privately, and instead displays a
    \counter $\clock(u)$ written on $\lceil \log (\subphasescorelem\numbits)
    \rceil$  bits, and one bit of the message~$s(u)$, which we refer to as the
    $\protocol$-bit. 
    Thus, the total number of bits displayed by the agent operating in \emul is
    $\lceil \log (\subphasescorelem \numbits) \rceil +1$.
    The purpose of the \counter $\clock(u)$ is to indicate to agent $u$ which
    bit of $s(u)$ to display. In particular, if the counter has value $0$, then
    the $0$-th bit (i.e the least significant bit) of $s(u)$ is shown as the
    \protocol-bit, and so on. In what follows, we refer to $s(u)$ as the {\em
    private message} of $u$, to emphasize the fact that this message is not
    visible in \emul. See Figure \ref{fig:savingfig} for an illustration.

    \begin{figure*}[!ht]
        \centering
        \includegraphics[width=1\textwidth]{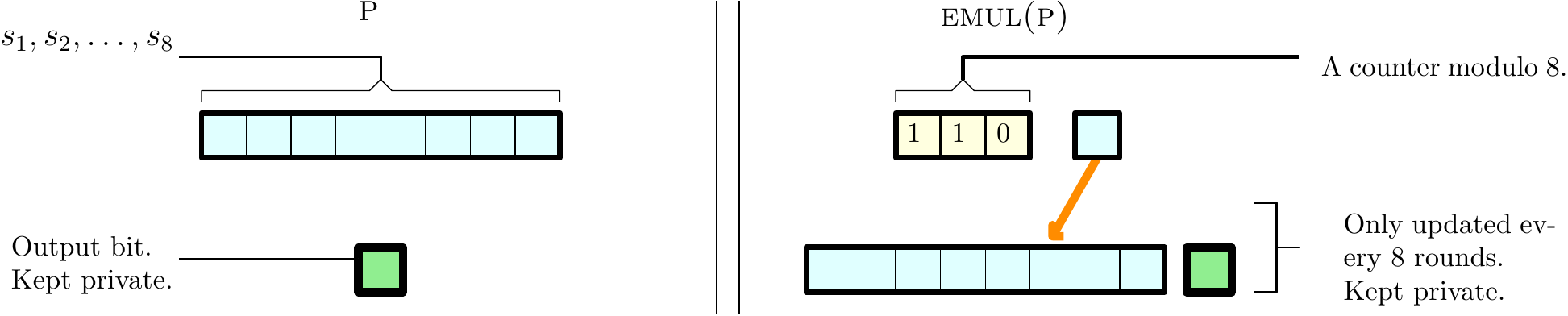}
        \caption{On the left is a protocol \protocol using $\numbits =8$ bits in
            total and pulling only one node per round ($\mesnum=1$). On the right
            is the emulated version \emul which uses $4$ bits only. The bits
            depicted on the bottom of each panel are kept privately, while the bits
            on the top are public, that is, appear in the visible part.} 
        \label{fig:savingfig}
    \end{figure*}

    Each round of \protocol executed in the $\pull(\mesnum,\ell)$ model by an
    agent $u$ is emulated by $ \subphasescorelem \numbits$ rounds of  \emul in
    the $\pull(2,\lceil \log (\subphasescorelem\ell) \rceil + 1)$ model. 
    We refer to such $ \subphasescorelem \numbits$ rounds as a {\em phase},
    which is further divided to $\subphasescorelem$ subphases of length
    $\numbits$. Note that since each agent samples 2 agents in a round, the
    total number of agents sampled by an agent during a phases is $\mesnum
    \numbits$.  
     
    For a generic agent $u$, a phase starts when its \counter $\clock(u)$ is
    zero, and ends after a full loop of its \counter (i.e. when $\clock(u)$
    returns to zero). Each agent $u$ is running protocol \simple on the $\lceil
    \log (\subphasescorelem \numbits) \rceil$  bits which correspond to her
    \counter $\clock(u)$. 
    Note that the phases executed by different agents may initially be
    unsynchronized, but, thanks to Proposition \ref{prop:simple}, the \counters 
    $\clock(u)$ eventually converge to the same value, for each agent $u$, and
    hence all agents eventually agree on when each phase (and subphase) starts. 

    Let $u$ be an arbitrary agent.
    Denote by $\hat s_1^{(1)}, \hat s_2^{(1)}, \ldots \hat s_\ell^{(1)}$, ...,
    $\hat s_1^{(\mesnum)}, \hat s_2^{(\mesnum)}, \ldots \hat
    s_\ell^{(\mesnum)}$  the \protocol-bits collected by $u$ from agents chosen
    u.a.r during a phase.
    Consider a phase and a round $\emulround\in \{1,\cdots, \subphasescorelem
    \numbits\}$ in that phase. Let $i$ and $j$ be such that $\emulround=j\cdot
    \numbits + i$. We view $\emulround$ as round $i$ of subphase $j+1$ of the
    phase. 
    On this round, agent $u$ pulls two messages from agents $v$ and $w$, chosen u.a.r.
    Once the clocks (and thus phases and subphases) have synchronized, agents
    $v$ and $w$ are guaranteed to be displaying the $i$th index of their
    private messages, namely, the values $s_i(v)$ and $s_i(w)$, respectively.
    Agent $u$ then sets $\hat s_i^{(2j-1)}$ equal to $ s_i(v)$ and
    $\hat s_i^{(2j)}$ equal to $s_i(w)$.

    In  \emul, the messages displayed by agents are only updated after a full
    loop of $C$. It therefore follows from the previous paragraph that the
    \protocol-bits collected by agent $u$ after a full-phase are distributed
    like the bits collected during one round of \protocol in the
    $\bitwisemodel$ model (see Definition \ref{def-bits}), assuming the
    \counters are synchronized already.

    \smallskip
    \noindent \emph{Correctness.} 
    The \independence property of \simple (Lemma \ref{lem:simple_ind}), implies
    that \simple still works when messages are constructed from the
    \protocol-bits collected by \emul. Therefore, from Proposition
    \ref{prop:simple}, eventually all the \counters $C$ are synchronized. 
    Since private messages $s$ are only updated after a full loop of $C$, 
    once the \counters $C$ are synchronized a phase of \emul corresponds to
    \emph{one} round of \protocol, executed in the $\bitwisemodel$ model.
    Hence, the hypothesis that \protocol operates correctly in a
    self-stabilizing way in the $\bitwisemodel$ model implies the correctness
    of \emul.

    \smallskip
    \noindent\emph{Running time.}
    Once the \counters $\clock(u)$ are synchronized, for all agents $u$, 
    using the first $\lceil \log (\subphasescorelem \numbits) \rceil$ bits of the messages,
    the agents reproduce an execution of \protocol with a multiplicative time-overhead of $\subphasescorelem \numbits$. 
    Moreover, from Proposition \ref{prop:simple}, synchronizing the clocks $C(u)$ takes $\bigO\left(\log (\mesnum m) \log n\right)$ rounds.
    Thus, the time to synchronize the \counters costs only an additive factor of $\bigO\left(\log (\mesnum m) \log n\right)$ rounds, and the total running time is  $\bigO\left(\log (\mesnum m) \log n \right) + \subphasescorelem \numbits \cdot \runningtime$.

    \smallskip
    \noindent\emph{Bitwise-independence property.}
    Protocol \emul inherits the \independence property from that of \simple
    (Lemma \ref{lem:simple_ind}) and \protocol (which has the property by
    hypothesis):
    We can apply Lemma \ref{lem:ipfact} where \prota is \simple and \protb is
    the subroutine described above, which displays at each round the bit of $P$
    whose index is given by a synchronized \counter $\clock$ modulo $\numbits$
    (i.e. the one produced by \simple). Observe that the aforementioned
    subroutine is self-stabilizing, since it emulates \protocol once \counters
    are synchronized. Then, in the notation of Lemma \ref{lem:ipfact}, \emul is
    \protc. 
\qed

\section{Self-Stabilizing Clock Synchronization}
\label{sec:clock}
In Section \ref{sec:simple} we described \simple ~-~ a simple self-stabilizing Clock Synchronization protocol that  uses $\log T$ bits per interaction. In this section we describe our main self-stabilizing Clock Synchronization protocol, \three, that uses only $3$ bits per interaction.
We first assume $T$ is a power of $2$. We show  how to get rid of this assumption in Section~\ref{sub:powerof}. 

\subsection{Clock Synchronization with $3$-bit messages, assuming $T$ is a power of two}
\label{sec:analysisthree}

\begin{figure*}[!ht]
    \centering
    \includegraphics[width=0.7\textwidth]{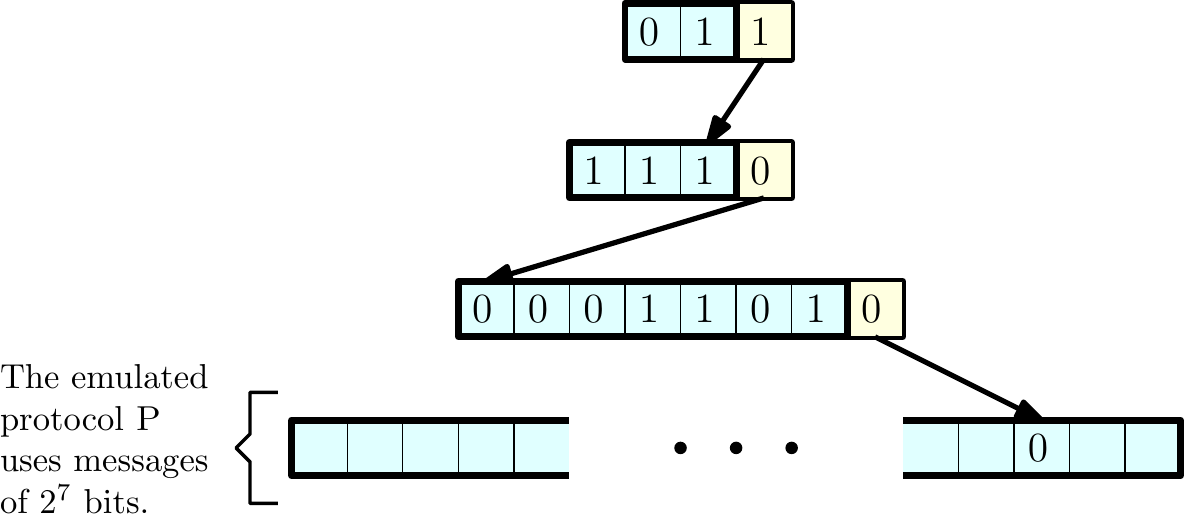}
    \caption{A more explicit view of our $3$-bit emulation of protocol
        \protocol, obtained by iterating Lemma \ref{lem:saving}. The down-most
        layer represents the $2^7$-bits message displayed by protocol
        \protocol. Each layer on the picture may be seen as the message of a
        protocol emulating \protocol with fewer bits, that is, as we go up on
        the figure we obtain more and more economical protocols in terms of
        message length. In particular, the top layer represents the  $3$-bit
        message in the final emulation. The left-most part of each message
        (colored in light blue) encodes a \counter. The right-most bit (colored
        in light yellow) of each message (except the bottom-most one)
        corresponds to a particular bit of the layer \emph{below} it. The index
        of this particular displayed bit is given by the value of the \counter.
        Each \counter on an intermediate layer is updated only when the
        \counter on the layer \emph{above} completes a loop (i.e., has value
        $0$). The \counter on the top-most layer is updated on every round.} 
    \label{fig:clocklayers}
\end{figure*}

In this section, we show the following result.
\begin{lemma}\label{lem:threebits}
    Let $T$ be a power of $2$. There exists a synchronization protocol
    \intermediate which synchronizes \counters modulo $T$ in time $ \tilde{\bigO}\left(\log^2 T \log n\right)$ using only 3-bit messages. Moreover, \intermediate has the \independence property.
\end{lemma}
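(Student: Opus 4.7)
The plan is to construct \intermediate by recursively applying \corelemma (Theorem \ref{lem:saving}) to \simple, each time shrinking the message size until only $3$ bits remain.

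Set $P_0 := \simple$. By Proposition \ref{prop:simple} and Lemma \ref{lem:simple_ind}, $P_0$ is a self-stabilizing $T$-\clocksyncronization protocol in the $\pull(2,\log T)$ model that has the \independence property and stabilizes within $L_0 = \bigO(\log T \log n)$ rounds. Inductively, define $P_{i+1}:=\emul(P_i)$ by applying Theorem \ref{lem:saving}. By that theorem each $P_{i+1}$ is a self-stabilizing protocol in the $\pull(2,\ell_{i+1})$ model with $\ell_{i+1}=\lceil\log\ell_i\rceil+1$, inherits the \independence property, and has running time
\begin{equation}
L_{i+1} \;=\; \bigO\!\big(\log\ell_i\cdot\log n \,+\, \ell_i\cdot L_i\big).
\end{equation}

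Starting from $\ell_0=\log T$, the map $\ell\mapsto\lceil\log\ell\rceil+1$ reaches the fixed value $3$ after $\tau=\bigO(\iterlog T)$ iterations, since $\lceil\log 3\rceil+1 = 3$ and each iterate is (up to $+1$) an additional logarithm. I then set $\intermediate:=P_\tau$, which uses $3$-bit messages, is self-stabilizing, and retains the \independence property (the latter being essential for later sections, where the compiler is applied again on top of \intermediate). To bound $L_\tau$, unrolling the recurrence yields
\begin{equation}
L_\tau \;\le\; C^\tau L_0 \prod_{j=0}^{\tau-1}\ell_j \;+\; C^\tau\log n \cdot \sum_{j=0}^{\tau-1}\log\ell_j\prod_{k=j+1}^{\tau-1}\ell_k
\end{equation}
for some absolute constant $C$.

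Because $\ell_j$ decreases like the $j$-th iterated logarithm of $T$, the sum $\sum_{j\ge 1}\log\ell_j$ is dominated by its first term $\log\log\log T$, hence $\prod_{j\ge 1}\ell_j = 2^{\bigO(\log\log\log T)} = \polylog\log T$. Combined with $\ell_0 = \log T$, $L_0 = \bigO(\log T\log n)$, and $C^\tau = 2^{\bigO(\iterlog T)}$ (also absorbed into $\polylog\log T$), the estimate becomes $L_\tau = \log^2 T\cdot\log n\cdot\polylog\log T = \tilde\bigO(\log^2 T\log n)$, as required. The main subtlety I anticipate is precisely this accounting of how the $\ell_j$'s and the accumulated constants combine over the $\bigO(\iterlog T)$ recursive layers; once this is absorbed into the $\tilde\bigO$, correctness, self-stabilization, and the preservation of \independence at every intermediate level are immediate consequences of Theorem \ref{lem:saving} and Lemma \ref{lem:simple_ind}.
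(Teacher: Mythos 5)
Your approach is the same as the paper's: recursively apply the \corelem{} to \simple, track the recurrence $L_{i+1} = \bigO(\log\ell_i\log n + \ell_i L_i)$ with $\ell_{i+1}=\lceil\log\ell_i\rceil+1$, and absorb the telescoping factors into $\tilde\bigO$. Your accounting of the product $\prod_j\ell_j$ and of $C^\tau$ is sound (you use $\tau=\bigO(\iterlog T)$ where the paper uses the weaker $\tau=\bigO(\log^{\circledast 4}T)$ from Lemma~\ref{lem:gt}; either suffices, as both are absorbed). The unrolling of the recurrence with two terms and a sum is a bit more explicit than the paper's single-product bound $L_i\le\const_1^i\prod_{j<i}\numbits_j L_1$, but it reaches the same estimate $\log^2T\log n\cdot\polylog\log T$.

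There is, however, one step you skip that the paper cannot: after the recursion, $P_\tau$ does \emph{not} directly produce a clock modulo $T$ incremented every round. The emulation introduces a multiplicative slowdown of $\prod_{i}\ell_i = \tilde\bigO(\log T)$, so the inherited clock of \simple is only advanced once per phase; what you actually have is a slowly-ticking counter, equivalently a clock modulo $T\cdot f(T)$ for some $f$. The $T$-\clocksyncronization problem as defined requires a counter modulo $T$ incremented each round. The paper closes this gap via Remark~\ref{rem:counters}: the emulation already maintains a synchronized within-phase counter (the \counter $\clock(u)$ of the \corelem, and recursively the inner ones), so one can set $C:= C' + C'' \bmod T$ where $C'$ is the slow clock and $C''$ tracks progress within a phase. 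Without invoking this, the statement ``$P_\tau$ synchronizes clocks modulo $T$'' is not literally what the construction delivers. It is a small fix, but you should state it explicitly.
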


Before presenting the proof of Lemma \ref{lem:threebits}, we need a remark about \counters.
\begin{remark}\label{rem:counters}
    In order to synchronize a clock $C$ modulo $T$, throughout the analysis we
    often obtain a clock $C'$ modulo $T$ \emph{which is incremented every $\ell$
    rounds}. However, $C'$ can still be translated back to a clock modulo
    $T$ which is incremented every round, by keeping a third clock $C''$ modulo
    $\ell$ and setting 
    \[
        C = C' + C'' \mod T.
    \]
\end{remark}

\proofof{Lemma \ref{lem:threebits}}
    At a high level, we simply apply iteratively \corelemma in order to reduce
    the message to $3$ bits, starting with \protocol = \simple.
    A pictorial representation of our recursive protocol is given in Figure
    \ref{fig:clocklayers}, and a pseudocode is given in Algorithm \ref{alg:intermediate}%
    \footnote{The pseudocode deviates from the presentation done in the proof,
    as it makes no use of recursion.}.

    Let us consider what we obtain after applying \corelemma
    the 
    first time to \protocol $=$\simple for clocks modulo $T$. 
    Recall that we assume that $T$ is a power of 2. 
    From Proposition \ref{prop:simple} we know that in this case, the
    convergence time of \simple is $\runningtime = \bigO\left( \log T \log
    n\right)$, the number of pulled agents at each round is $2$ and
    the number of bits of each message is $\numbits = \log T$. 

    With the emulation produced by \corelemma, the \counter used in \protocol =\simple is incremented only every $\numbits =\log T$ rounds. 
    Another way to interpret this is that we obtain a \counter modulo $T\cdot \numbits$ and using Remark \ref{rem:counters} we can view the clock $C:=T\cdot \numbits \mod T$,
    as a counter modulo $T$ that is incremented at each round. 
    Hence, by the running time analysis of \corelemma, we obtain a protocol $\emul$ which synchronizes a \counter modulo $T$ in
    $\bigO\left(\log n\log \log T \right) + \bigO\left(\log^2 T \log n\right) = \bigO\left( \log^2 T \log n\right)$ rounds. 
    The message size is reduced from $\log T$ to $\lceil \log \log T \rceil +1 = \bigO\left(\log \log T \right)$.

    By repeatedly applying \corelemma, we reduce the size of the message $\numbits$ as long as $\numbits > \lceil \log \numbits \rceil +1$, i.e. as long as $\numbits>3$. 
    The number of repeated application of \corelemma until the message size is $3$ is thus of order $\log^{*} T$. 

    \begin{pseudocode}[!ht]
        \textbf{\intermediate protocol}

        \smallskip
        \textsc{Memory:}
        Each agent $u$ keeps a sequence of \counters $\clock_1,\ldots,
        \clock_\finaliter$ and a sequence of bits $b_1, \ldots,
        b_\finaliter$. The clock $\clock_1$ runs modulo $T$, the \counter
        $\clock_\finaliter$ runs modulo $4$, and the $i$-th \counter
        $\clock_i$ runs modulo $2^{\ell_i-1}$ (see proof of Lemma
        \ref{lem:threebits}).
        Each agent $u$ also maintains a sequence of heaps (or some ordered
        structure) $S^{\delta}_i$, for each $\delta \in \{1, 2\}$ and $i =
        1,\ldots, \finaliter$.

        \smallskip
        \textsc{Message:} 
        $u$ displays $C_\finaliter$ ($2$ bits) and $b_\finaliter$ ($1$
        bit). For all $i \in [\finaliter], b_i(u)$ is the
        $\clock_i(u)$-th bit of the string obtained concatenating the
        binary representation of $C_{i-1}(u)$ and $b_{i-1}(u)$.

        \smallskip
        \begin{description}
            \item[1] $u$ samples two agents $u_1$ and $u_2$.
            \item[2] $u$ updates its clock with the bitwise majority of its clock and those
                of the sampled nodes.
            \item[3] $u$ increments its clock by one unit.
            \item[4] $u$ sets $i^*$ equal to the maximal $i< \finaliter$
                such that $\clock_{i+1} \neq 0$.
            \item[5] For $\delta=1,2$, $u$ pushes $b_\finaliter(u_\delta)$ in $S^{\delta}_{i^*}$.\\
                (Note that, if $\clock_{i^*+1}, \ldots, \clock_{\finaliter}$ are
                synchronized, then all agents are displaying the bit with index
                $\clock_{i^*+1}$ of $(\clock_{i^*},b_{i^*})$ as $b_{\finaliter}$.)
            \item[6] {While} $i>1$ and $\clock_{i} = 0$, $u$ does the following:
                \item[7] \homemadeloop Pops the last $\ell_{i-1}-1$ bits from $S^\delta_{i-1}$ and 
                    set $s^\delta$ equal to it. 
                \item[8] \homemadeloop Sets $\clock_{i-1}$ equal to the bitwise majority of
                    $\clock_{i-1}(u)$, $s^{1}$ and $s^{2}$.
                \item[9] \homemadeloop Increments $\clock_{i-1}$ and
                    decrement $i$ by one unit.
        \end{description}
        \caption{Iterative version of the protocol \intermediate,
            executed by each agent $u$, unfolding the recursion in proof of
            Lemma \ref{lem:threebits}.}
        \label{alg:intermediate}
    \end{pseudocode}

    Let us analyze the running time. 
    Let $\numbits_1 = \log T$, $\numbits_{i+1} =  \lceil \log \numbits_i \rceil +1$
    and let $\finaliter(T) = \finaliter$ be the smallest integer such that $\numbits_\finaliter= 3$.
    We apply \corelemma $i\leq \finaliter$ times, and we obtain a message size $\numbits_i$ and a running time $L_i$, such that 
    \begin{align}
        L_{i+1} \leq \const_1(\log \numbits_i \log n + \numbits_i L_i),
        \label{eq:recruntime}
    \end{align}
    for some constant $\const_1$ independent of $i$.
    We set $L_1$ to be
     $L_1 :=  L_{\simple} \vee \log n =  \bigO\left(\log T \log n\right) \vee \log n$, taking the maximum with $\log n$ for technical convenience. 
    The second term dominates in (Equation \ref{eq:recruntime}) because $\numbits_i >> \log \numbits_i$ and $L_i > \log n$. Hence $L_i$ is at most of order $\prod_{j<i} \numbits_j\cdot L_1$. More precisely, by induction we can bound $L_i \leq \const_1^i\prod_{j=1}^{i-1} \numbits_{j} L_1$, since
    \begin{align}
        L_{i+1} 
        \leq \const_1\log \numbits_i \log n + \const_1^i\prod_{j=1}^{i} \numbits_{j} \cdot L_1
        \leq \const_1 \numbits_i \log n + \const_1^i \prod_{j=1}^{i} \numbits_j \cdot L_1 
        \leq 2\const_1^i \prod_{j=1}^{i} \numbits_j \cdot L_1 
            \leq \const_1^{i+1} \prod_{j=1}^{i} \numbits_j \cdot L_1,
    \end{align}
    where we use the fact that $\const_1 >2$, 
    and the definition of~$L_1$.

    The running time of \emul$=$\synclock after the last application of \corelemma, i.e. $\finaliter$, is thus
    \begin{align}
        \runningtimethree := L_{\finaliter} \leq \const_1^{\finaliter} 
        \prod_{i=1}^{\finaliter} \numbits_i L_1.
    \end{align}
    We use the following fact.
    %
    \begin{fact}
        \label{fact:exp_ineq}
        If $|x|<1$, it holds
        \[
        e^{\frac{x}{1+x}}\leq1+x\leq e^{x}\leq 1+\frac{x}{1-x}.
        \]
    \end{fact}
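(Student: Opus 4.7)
The plan is to reduce the chain of three inequalities to the single classical bound $1+x \le e^x$, then obtain the other two by algebraic substitutions. None of the three inequalities is deep; the only care needed is with signs and with ensuring the relevant denominators ($1-x$ and $1+x$) stay positive, which is exactly what the hypothesis $|x|<1$ guarantees.

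First I would establish the middle inequality $1+x \le e^{x}$, valid for all real $x$. This is immediate from the strict convexity of $e^{x}$: the tangent line at $0$ to the graph of $e^{x}$ is $y=1+x$, and a convex function lies above any of its tangents. Equivalently, set $g(x)=e^{x}-1-x$; then $g(0)=0$, $g'(x)=e^{x}-1$ vanishes only at $0$, is negative on $(-\infty,0)$ and positive on $(0,\infty)$, so $g$ attains its minimum $0$ at $x=0$.

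Next I would derive the right-most inequality $e^{x}\le 1+\tfrac{x}{1-x}$. Observe that $1+\tfrac{x}{1-x}=\tfrac{1}{1-x}$, so the claim is equivalent to $e^{x}(1-x)\le 1$, i.e.\ $1-x\le e^{-x}$. But applying the already proven bound $1+y\le e^{y}$ with $y=-x$ gives exactly $1-x\le e^{-x}$; since $|x|<1$ ensures $1-x>0$, we may divide to obtain $e^{x}\le \tfrac{1}{1-x}=1+\tfrac{x}{1-x}$.

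Finally I would derive the left-most inequality $e^{x/(1+x)}\le 1+x$ by substituting $z=\tfrac{x}{1+x}$ into the right-most inequality proved above. Since $|x|<1$ we have $1+x>0$, and a direct computation gives $1-z = 1-\tfrac{x}{1+x}=\tfrac{1}{1+x}$, hence $|z|<1$ as required. Applying $e^{z}\le\tfrac{1}{1-z}$ with this choice of $z$ yields $e^{x/(1+x)}\le\tfrac{1}{1/(1+x)}=1+x$. The main (and only mild) obstacle is keeping track of the domain conditions at each substitution; the hypothesis $|x|<1$ is precisely what makes both substitutions legal, so the whole fact follows in a few lines.
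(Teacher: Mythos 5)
The paper states this as a Fact without proof, so there is nothing to compare your argument against; I'll just assess it on its own terms. Your reduction of all three inequalities to the classical $1+y\leq e^{y}$ is clean and correct, and it is exactly the sort of short argument one would expect.

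There is one small misstatement worth fixing. In the final step you set $z=\tfrac{x}{1+x}$ and assert ``hence $|z|<1$ as required.'' This is false in general: for $x=-0.9$ one gets $z=-9$. What the substitution actually needs is only $z<1$ (equivalently $1-z>0$), because the inequality $e^{z}\le\tfrac{1}{1-z}$ that you are invoking holds for all $z<1$ — your own derivation of it in the previous step uses only $1-x>0$, not $|x|<1$. And $z=\tfrac{x}{1+x}<1$ does hold for every $x>-1$, since it is equivalent to $0<1$ once you clear the positive denominator $1+x$. So the proof goes through; just replace the claimed condition $|z|<1$ with $z<1$ and note that this follows from $1-z=\tfrac{1}{1+x}>0$, which you already computed.
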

    From the bounds 
    $L_1 = \bigO(\log T \log n)$, 
    $\prod_{i=1}^{\finaliter} \numbits_i \leq \numbits_1 \numbits_2 \numbits_3^{\finaliter}$, 
    $\numbits_1 = \bigO\left(\log T\right)$, 
    $\numbits_2 = \bigO\left(\log \log T\right)$ and Lemma \ref{lem:gt}, we obtain 
    $\const_1^{\finaliter} = \bigO\left(\log \log \log T\right)$ and
    \[
        \numbits_3^{\finaliter} \leq 2^{\bigO\left((\log^{\circledast 4} T)^2\right)}
        \leq 2^{\bigO\left( \log \log \log T\right)}
        \leq \left(\log \log T \right)^{\bigO(1)}.
    \]
    We thus conclude that
    \begin{align}
        \runningtimethree 
         &\leq \const_1^{\finaliter} \prod_{i=1}^{\finaliter} \numbits_i L_1 
        \leq \bigO\left(\log \log \log T\right) \cdot \numbits_1 \numbits_2 \numbits_3^{\finaliter} \cdot \bigO(\log T \log n)\\
        &\leq  \bigO\left(\log \log \log T\right) 
            \cdot \bigO\left(\log T\right) 
            \cdot \bigO\left(\log \log T\right) 
            \cdot \bigO\left(\log \log T \right)^{\bigO(1)} 
            \cdot \bigO(\log T \log n)\\
        &\leq  \log^2 T \log n \cdot  \left(\log \log T\right)^{\bigO(1)}.
    \end{align}
    The total slowdown with respect to \simple corresponds to 
    $\prod_{i=1}^\tau \numbits_i = \tilde{\bigO}(\log T)$. Hence the \counter produced by the emulation is incremented every $\tilde{\bigO}(\log T)$ rounds.
    In other words we obtain a \counter modulo $T\cdot f(T)$ for some function $f$.
    But using Remark \ref{rem:counters} we can still view this as a \counter modulo $T$.
\qed

\subsection{Extension to general $T$ and running time improvement.}\label{sub:powerof}

In this subsection we aim to get rid of the assumption that $T$ is a power of $2$ in Lemma \ref{lem:threebits}, and also reduce the running time of our protocol to $\tilde{\bigO}\left(\log n \log T\right)$, proving Theorem \ref{thm:main}.

\begin{pseudocode}[!ht]
    \textbf{\synclock protocol}

    \smallskip
    \textsc{Memory:}
    Each agent $u$ stores a \counter $C'(u)$ which runs modulo $T' \gg \const
    \log n \log T$. Each agent $u$ also stores a variable $Q$ which is
    incremented only once every $T'$ rounds and runs modulo $T$.

    \smallskip
    \textsc{Message:}
    Each agent $u$ displays $4$ bits.
    On the first $3$ bits, protocol \intermediate is applied to synchronize $C'$. 
    The $4$-th bit $b(u)$ is the
    bit with index $(\lfloor \frac{C'(u)}{\const\log n} \rfloor \mod \lceil
    \log T \rceil )$ of $Q(u)$.

    \smallskip
    \begin{description}
        \item[1] $u$ samples two agents $u_1$ and $u_2$.
        \item[2] $u$ updates $b(u)$ with the majority of $b(u)$, $b(u_1)$ and $b(u_2)$.
        \item[3] If $C'=0$, increment $Q$ by one unit modulo $T$.
    \end{description}

    \smallskip
    \textsc{Output:} 
    The \counter modulo $T$ is obtained as $ C := \left(C'+ Q \cdot T'\right) \mod T$
    \caption{The protocol $4$-bit \synclock, executed by each agent $u$.}
    \label{alg:synclock}
\end{pseudocode}

\proofof{Theorem \ref{thm:main}}
    From Lemma \ref{lem:threebits}, we know that 
    \intermediate synchronizes \counters modulo $T$ in time $ \tilde{\bigO}\left(\log^2 T \log n\right)$ using only $3$-bit messages, provided that $T$ is a power of 2. 
    While protocol \intermediate emulates protocol \simple, it displays the first bit of the message of \simple only once every $\tilde{\bigO}\left(\log T\right)$ rounds. Of course, it would be more efficient to display it $\bigO\left(\log n\right)$ times in a row, so that \major would make every agent agree on this bit, and then move to agreeing on the second bit, and so on.
    To achieve this, as in the proof of \simple, we can view a \counter modulo $T$, say $Q$, as written on $\log T $ bits. If agents already possess a ``small'' counter modulo $T' := \bigO\left(\log T \log n\right)$ they can use it to display the first bit for $\bigO\left(\log n\right)$ rounds, then the second one for  $\bigO\left(\log n\right)$ rounds, and so on until each one of the $\lceil \log T \rceil$ bits of $T$ has been synchronized. 
    This would synchronize all bits of the desired \counter within $\bigO\left(\log T \log n\right)$ rounds, w.h.p., while being very economical in terms of message length, since only $1$ bit is displayed at any time.

    Therefore,  we can  use Lemma \ref{lem:threebits} to synchronize a counter modulo 
    $\bigO\left(\log T \log n \right)$ in  $\tilde \bigO ((\log \log T)^2\log n )$ rounds,
    using $3$ bits per message. Then, we can use a fourth bit to run \major on each of the $\log T$ bits of $Q$ for $\bigO(\log n)$ consecutive rounds, for a total running time of $\bigO(\log T \log n)$ rounds. At this point, an application of \corelemma would give us a protocol with running time $\bigO(\log T \log n)$ using $3$-bit messages. However, perhaps surprisingly, a similar strategy enables us to synchronize a clock modulo any integer (not necessarily a power of $2$). 
    %

    Let us assume that $T\in \NN$ is an arbitrary integer. Let $\consconvtime$ be an upper bound on the convergence time of \major which guarantees a correct consensus with probability at least $1-n^{-\consconvprob}$, for some constant $\const$ large enough \cite{DGM11}. Let $T'$ be the smallest power of $2$ bigger than $\log T \cdot \left(\agreementQ\right)$.
    By Lemma \ref{lem:threebits}, using $3$ bits, the agents can build a synchronized \counter $C'$ running modulo $T'$ in time $\tilde \bigO ((\log \log T)^2\log n )$. The other main ingredient in this construction is another \counter $Q_{T'}$ which is incremented once every $T'$ rounds and runs modulo $T$. 
    The desired clock modulo $T$, which we denote $C$, is obtained by
    \begin{align}
        C := \left(C' + Q_{T'} \cdot T' \right)\mod T.
    \end{align}
    It is easy to check, given the definitions of $C'$ and $Q_{T'}$ that this choice indeed produces a clock modulo $T$.

    It remains to show how the \counter $Q_{T'}$ modulo $T$ is synchronized. 
    On a first glance, it may seem as if we did not simplify the problem since $Q$ is a clock modulo $T$ itself. However, the difference between $Q_{T'}$ and a regular \counter modulo $T$ is that $Q_{T'}$ is incremented only once every $T'$ rounds. This is exploited as follows.

    The counter $Q_{T'}$ is written on $\lceil \log T\rceil$ internal bits. We show how to synchronize $Q_{T'}$ using a 4-th bit in the messages, similarly to the aforementioned strategy to synchronize $Q$; we later show how to remove this assumption using \corelemma. Let us call a loop of $C'$ modulo $T'$ an \emph{epoch}. The rounds of an epoch are divided in phases of equal length $\agreementQ$ (the remaining $T' \mod (\agreementQ)$ rounds are just ignored). 
    The clock $C'$ determines which bit from $Q_{T'}$ to display. The first bit of $Q_{T'}$ is displayed during the first phase, then the second one is displayed during the second phase, and so on. By Theorem \ref{thm:doerr}, the length of each phase guarantees that consensus is achieved on each bit of $Q_{T'}$ via%
    \footnote{Observe that, once \clockname $C'$ is synchronized, the bits of
        $Q_{T'}$ do not change for each agent during each subphase. Thus, we
        may replace \major by the \textsc{Min} protocol where on each round of
        subphase $i$ each agent $u$ pulls another agent $v$ u.a.r. and updates her
        $i$-th bit of $Q$ to the minimum between her current $i$-th bit of $Q$ and the
        one of $v$. However, for simplicity's sake, we reuse the already introduced
        \major protocol.}
    \major w.h.p. 
    More precisely, after the first bit has been displayed for $\agreementQ$ rounds, all agents agree on it with probability%
    \footnote{From Theorem \ref{thm:doerr}, we have that after $\const \log n$ rounds, with $\const$ large enough, the probability that consensus has not been reached is smaller than $ \frac{1}{n^2}$. Thus, after $\consttwo \cdot \const \log n$ rounds, the probability that consensus has not been reached is smaller than $\frac{1}{n^{2\consttwo}}$. 
    If we choose $\consttwo \log n = \log n + \log \log T$, we thus get the claimed upper bound $\frac{1}{n^2 \log T}$.}
    $1 - \frac{1}{n^2 \log T}$, provided $\constpower$ is large enough. 
    Thus, at the end of an epoch, agents agree on all $\lceil \log T \rceil$ bits of $Q_{T'}$ with probability greater than $(1-\frac{1}{n^2 \log T})^{\log T}\gg 1-\bigO(n^{-2})$.

    We have thus shown that, by the time $C'$ reaches its maximum value of $T'$, i.e. after one epoch, all agents agree on $Q_{T'}$ w.h.p. and then increment it jointly.
    From Lemma \ref{lem:threebits}, \intermediate takes $\tilde{\bigO}\left(\log^2 T' \log n\right) = \bigO\left((\log \log n + \log \log T)^2 \log n\right) = \bigO\left(\left((\log \log n)^2\log n + (\log \log T)^2 \log n\right)\right)$ rounds to synchronize a clock $C'$ modulo $T'$ w.h.p.
    Together with the $\log T\left(\agreementQ\right)$ rounds to agree on $Q_{T'}$ w.h.p., this implies that after $\log T \log n \cdot
    (\log \log T)^{\bigO(1)} \cdot (\log \log n)^{\bigO(1)}=\tilde{\bigO}\left(\log T \log n\right)$ rounds the clocks $C$ are all synchronized w.h.p.

    Finally, we show how to get rid of the extra $4$-th bit to achieve agreement on $Q_{T'}$. Observe that, once $\clock'$ is synchronized, this bit is used in a self-stabilizing way. Thus, since \intermediate has the \independence property, using Lemma \ref{lem:ipfact}, the protocol we described above possesses the \independence property too. 
    By using \corelemma we can thus reduce the message size from $4$ bits to $\lceil \log 4 \rceil +1 = 3$ bits, while only incurring a constant multiplicative loss in the running time. 
    The  \clockname we obtain, counts modulo $T$ but is incremented every $4$ rounds only. It follows from Remark \ref{rem:counters} that we may still view this as a \counter modulo $T$.
\qed

\begin{remark}[Internal memory space]
The internal memory space needed to implement our protocols \simple, \intermediate, and \synclock is close to $\log T$ in all cases:
protocol \simple uses one counter written on $\log T$ bits, \intermediate needs internal memory of size 
\begin{equation}
   \log T + \bigO\left(\log \log T+ \log \log \log T + \ldots\right) \leq  \log T (1+o(1)),
\end{equation}
and the internal memory requirement of  \synclock is of order $\log T + \log \log n$. 
\end{remark}

\section{\majority \broadcast with a Clock}\label{sec:majority}

In this section we assume that agents are equipped with a synchronized \counter $\clock$ modulo $\const \log n$ for some big enough constant $\const>0$. In the previous section we showed how to establish such a synchronized clock in $\tilde{\bigO}(\log n)$ time and using 3-bit messages.
We have already seen in Section \ref{sec:technical} how to solve the \broadcast problem (when we are promised to have a single source agent) assuming such synchronized \counters, by paying an extra bit in the message size and an $\bigO(\log n)$ additive factor in the running time. 
This section is dedicated to showing that, in fact, the  more general \majority \broadcast problem can be solved with the same time complexity and using 3-bit messages, proving Theorem \ref{thm:multi}.

In Section \ref{sec:spread}, we describe and analyze protocol \majprot, which solves
\majority \broadcast
by paying only a $\bigO(\log n)$ additive overhead in the running time w.r.t. \clocksyncronization. 
For clarity's sake, we first assume that the protocol is using $4$ bits (i.e. 1 additional bit over the $3$ bits used for \clocksyncronization), and we later show how to decrease the number of bits back to 3 in Section \ref{sub:proofofmulti}, by applying \corelemma. 

The main idea behind the $3(+1)$-bit protocol, called
\majprot, is to make the sources' \inputbits disseminate on the system in a way
that preserves the initial ratio $\frac{\sizeblack}{\sizewhite}$ between the
number of sources supporting the majority and minority \inputbit. This is
achieved by dividing the dissemination process in phases, similarly to the main
protocol in \cite{OHK14} which was designed to solve the \broadcast problem in
a variant of the $\push$ model in which messages are affected by noise. The
phases induces a spreading process which allows to leverage on the
concentration property of the Chernoff bounds, preserving the aforementioned
ratio. While, on an intuitive level, the role of noisy messages in the model
considered in \cite{OHK14} may be related to the presence of sources having
conflicting opinion in our setting, we remark that our protocol and its
analysis depart from those of \cite{OHK14} on several key points: while the
protocol in \cite{OHK14} needs to know the the noise parameter, \majprot do not
assume any knowledge about the number of different sources, and our analysis do
not require to control the growth of the number of speaking agents from above%
\footnote{To get such upper bound, the analysis in \cite{OHK14} leveraged on the property that in the $\push$ model the number of agents getting a certain message can be upper bounded by the number of agent sending such message, which is not the case for the passive communication of the $\pull$ model.}.

In order to perform such spreading process with 1 bit only, the protocol in \cite{OHK14} leverages on the fact that in the $\push$ model agents can choose {\em when to speak}, i.e. whether to send a message or not. To emulate
this property in the $\pull$ model, we use the parity of the clock $\clock$: on
odd rounds agents willing to ``send'' a  $0$ display $0$, while others display $1$ and conversely on even rounds. Rounds are then grouped by two, so $2$ rounds in the $\pull$ model correspond to $1$ round in the $\push$ version.

\subsection{Protocol \majprot}\label{sec:spread}
 
\global\long\def\prot{\textsc{Com-Spread}}

\global\long\def\constclock{\gamma_{phase}}
\global\long\def\clock{C}
\global\long\def\timeprot{2\lceil \newconstboost \log n\rceil + \newconstboost\lceil 2\log n \rceil}

\global\long\def\constphase{\gamma_{phase}}
\global\long\def\constboost{\constphase}
\global\long\def\newconstboost{\gamma_{phase}}

\global\long\def\consttrial{\gamma_{trial}}

In this section we describe protocol \majprot. As mentioned above, for clarity's sake we assume that \majprot uses $4$-bit messages, and we show how to remove this assumption in Section \ref{sub:proofofmulti}.
Three of such bits are devoted to the execution $\synclock$, in order to synchronize a \counter $\clock$ modulo $\timeprot$
for some constant $\constclock$ large enough.
Throughout this section we assume, thanks to Theorem \ref{thm:main}, that $\clock$ has already been synchronized, which happens after $\tilde{\bigO} (\log n)$ rounds from the start of the protocol. In Section \ref{sec:majprot}, we present a protocol \majprotgiven solving \majority \broadcast assuming agents already share a common clock.

\subsubsection{Protocol \majprotgiven}
\label{sec:majprot}
Let $\newconstboost$ be a constant to be set later. 
Protocol
\majprotgiven is executed periodically over periods of length $\timeprot$, given by a \counter $\clock$. One run of length $\timeprot$ is divided in $2+\lceil 2\log n \rceil$ phases, the first
and the last ones lasting $\lceil \newconstboost\log n \rceil$ rounds, all the other
$\lceil 2\log n \rceil$ phases lasting $\newconstboost$ rounds. The first phase is called \emph{boosting},
the last one is called \emph{polling}, and all the intermediate ones
are called \emph{spreading}.
For technical convenience, in \majprotgiven agents disregard the messages they get as their second pull\footnote{In other words, \majprotgiven works in the $\pull(1)$ model.}.

During the boosting and the spreading phases, as we already explained in the introduction of this section, we make use of the parity of time to emulate the ability to actively send a message or not to communicate anything as in the $\push$ model%
\footnote{Of course, agents are still not able to control who sees/contacts them.}
(in the first case we say that the agent is \emph{speaking}, in the second case we say that the agent is \emph{silent}).
This induces a factor $2$ slowdown which we henceforth  omit for simplicity.

\global\long\def\freqblack{f_{1}}
\global\long\def\freqwhite{f_{0}}

At the beginning of the boosting, each non-source agent $u$ is
silent. 
During the boosting and during
each spreading phase, each silent agent pulls until she sees a speaking
agent. When a silent agent $u$ sees a speaking agent $v$,
$u$ memorizes $\bitfirst\left(v\right)$ but remains silent until
the end of the phase; at the end of the current phase, $u$ starts speaking and sets $\bitfirst\left(u\right)=\bitfirst\left(v\right)$.
The bit $\bitfirst$ is then never modified until the \counter $\clock$ reaches $0$ again.
Then, during the
polling phase, each agent $u$ counts how many agents with $\bitfirst=1$
and how many with $\bitfirst=0$ she sees.
At the end of the phase, each agent $u$ sets their output bit to the most frequent value of $\bitfirst$ observed during the
polling phase.
We want to show that, for all agents, the latter is w.h.p. $\bitmaj$ (the most frequent initial \outputbit among sources).

\begin{pseudocode}[!ht]
    \textbf{\majprotgiven protocol}

    \smallskip
    \begin{description}
        \item[1] If $u$ is not speaking and the current phase is either the boosting or 
            the spreading one, $u$ does the following:
            \item[2] \homemadeloop $u$ observes a random agent $v$. 
            \item[3] \homemadeloop If $v$ is speaking, 
                $u$ sets $\bitfirst(u)$ equal to $\bitfirst(v)$,\\
                \-\hspace{.5cm} and $u$ will be speaking from the next phase.
            \item[4] \homemadeloop $u$ sets $c_0$ and $c_1$ equal to $0$.
        \item[5] If the current phase is polling:
            \item[6] \homemadeloop $u$ observes a random agent $v$. 
            \item[7] \homemadeloop If $b_1(v)=1$, $u$ increments $c_1$, 
                otherwise increment $c_0$. 
        \item[8] $u$ outputs $1$ if and only if $c_1 > c_0$.
    \end{description}
    \caption{The protocol \majprotgiven, executed by each agent $u$.}
    \label{alg:majprot}
\end{pseudocode}

\subsubsection{Analysis}
We prove that at the end of the last spreading phase w.h.p.
all agents are speaking and each agent has $\bitfirst=1$ with probability
$\frac{1}{2}+\trialbias$ for some positive constant $\trialbias=\trialbias\left(\newconstboost, \epsilon\right)$ (where the dependency in $\constphase$ is monotonically increasing),
$\bitfirst=0$ otherwise. From the Chernoff bound (Corollary \ref{cor:cb})
and the union bound, this implies that when $\newconstboost>\frac{8}{\trialbias}$
at the end of the polling phase w.h.p. each agent learns
$\bitmaj$. 

Without loss of generality, let $\bitmaj=1$, i.e. $\sizeblack>\sizewhite$. 

The analysis is divided in the following lemmas.
\begin{lemma}
    \label{lem:boosting_conc}
    At the end of the boosting phase it holds
    w.h.p.
    \begin{align}
        \boostblack+\boostwhite 
        & \geq\left(\sizeblack+\sizewhite\right)\frac {\constphase}3 \log
                n\cdot\mbox{\textbf{1}}_{\left\{
                    \sizeblack+\sizewhite<\frac{n}{2\newconstboost \log n}\right\} }
            +\left(n\left(1-\frac{1}{\sqrt{e}}\right) 
            +\frac{1}{\sqrt{e}}\left(\sizeblack+\sizewhite\right)-\sqrt{n\log
                n}\right)\nonumber \\
         & \qquad \cdot\mbox{\textbf{1}}_{\left\{
             \frac{n}{2\newconstboost}\leq\sizeblack+\sizewhite\leq n-2\sqrt{n\log
             n}\right\} }
         \qquad +n\mbox{\textbf{1}}_{\left\{
                 \sizeblack+\sizewhite>n-2\sqrt{n\log n}\right\}
             },\label{eq:boost_spread}\\
        \frac{\boostblack}{\boostwhite} 
             &\geq\frac{\sizeblack}{\sizewhite}\left(1-\sqrt{\frac{9}{\newconstboost\sizewhite}}\right).
        \label{eq:boost_bias}
    \end{align}
    \global\long\def\E{\mathbb{E}}
\end{lemma}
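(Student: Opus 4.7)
The plan is to first set up the joint distribution of what happens during the boosting phase, and then apply Chernoff-type bounds separately for the total-count lower bound \eqref{eq:boost_spread} and the ratio lower bound \eqref{eq:boost_bias}. Let $s := \sizeblack + \sizewhite$ and $T_B := \lceil \newconstboost \log n \rceil$. The crucial observation is that during the boosting phase the set of speakers is frozen — only at the very end do the silent agents who have already seen a speaker switch to speaking. Therefore the randomness decouples across the $n - s$ non-source agents: each one independently samples $T_B$ agents u.a.r. Writing $X_v$ for the indicator that non-source $v$ sees a speaker during the phase, one has $\Pr[X_v = 1] = 1 - (1 - s/n)^{T_B}$; conditioned on $X_v = 1$, the first speaker seen by $v$ is uniform over the $s$ sources, so $v$ copies $\bitfirst = 1$ with probability $\sizeblack/s$ and $\bitfirst = 0$ with probability $\sizewhite/s$, independently across $v$.

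For \eqref{eq:boost_spread}, I would bound $\mathcal{N} := \sum_v X_v$ regime by regime. In the small regime $s < n / (2 \newconstboost \log n)$, the inequality $1 - e^{-x} \geq 2(1 - e^{-1/2}) x$ valid on $[0, 1/2]$ (applied at $x = s T_B / n$) gives $\mathbb{E}[\mathcal{N}] = \Omega(s \newconstboost \log n)$, and multiplicative Chernoff yields $\boostblack + \boostwhite \geq s + \mathcal{N} \geq s \newconstboost (\log n)/3$ w.h.p. In the medium regime $s \geq n / (2 \newconstboost)$, the estimate $(1 - s/n)^{T_B} \leq e^{-s T_B/n} \leq 1/\sqrt{e}$ (since $s T_B / n \geq 1/2$) gives $\mathbb{E}[\boostblack + \boostwhite] \geq n(1 - 1/\sqrt{e}) + s/\sqrt{e}$, and an additive Hoeffding bound on $\mathcal{N}$ — a sum of independent Bernoullis — absorbs the $-\sqrt{n \log n}$ slack. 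Finally, in the large regime $n - s \leq 2 \sqrt{n \log n}$, the probability $(1 - s/n)^{T_B}$ is at most $(2 \sqrt{(\log n)/n})^{T_B}$, which is $n^{-\omega(1)}$ for a large enough constant $\newconstboost$; a union bound over the $\leq 2 \sqrt{n \log n}$ silent agents shows that all of them become speakers w.h.p., giving $\boostblack + \boostwhite = n$.

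For the ratio bound \eqref{eq:boost_bias}, I would condition on the random set $N$ of new speakers and on its size $\mathcal{N}$. Under this conditioning, $\boostblack - \sizeblack$ and $\boostwhite - \sizewhite$ are the two complementary counts of $\mathcal{N}$ independent Bernoullis with success probability $\sizeblack/s$. Setting $\delta := \tfrac{1}{2} \sqrt{9/(\newconstboost \sizewhite)}$, I would apply multiplicative Chernoff twice — a lower tail on $\boostblack$ and an upper tail on $\boostwhite$ — with this deviation parameter. The key step is checking that $\delta^{2} \cdot \mathbb{E}[\boostwhite \mid \mathcal{N}] = \Omega(\log n)$ uniformly across the regimes, using the high-probability lower bounds on $\mathcal{N}$ established above. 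Taking the ratio of the two resulting inequalities then gives $\boostblack / \boostwhite \geq (\sizeblack/\sizewhite) \cdot (1 - \delta)/(1 + \delta) \geq (\sizeblack/\sizewhite)(1 - 2 \delta)$, which matches the claimed factor up to the choice of constant hidden in $\delta$.

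The main obstacle I expect is threading $\delta$ through the ratio bound uniformly across the three regimes \emph{without} picking up an extra $\log n$ factor. For very small $\sizewhite$ the Chernoff slack on $\boostwhite$ is governed by $\mathcal{N} \sizewhite / s$, not by $\sizewhite$ alone, so it is essential to condition on the event $\mathcal{N} \geq \Omega(s \newconstboost \log n)$ from the total-count step \emph{before} invoking Chernoff on the two bit-counts. A secondary, routine difficulty is pinning down the Bernoulli-inequality constants in the small regime precisely enough to recover the $\newconstboost/3$ prefactor in \eqref{eq:boost_spread}.
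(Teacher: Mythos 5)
Your proposal matches the paper's proof quite closely in both decomposition and key calculations. The decoupling observation (speakers are frozen during the boosting phase, so each non-source samples independently, and conditioned on seeing a speaker it copies $\bitfirst=1$ with probability $\sizeblack/s$) is exactly the structural fact the paper exploits, via its set $\setboost$. The three-regime case analysis for \eqref{eq:boost_spread}, the use of $1-e^{-x}\geq\Omega(x)$ in the small regime, the $e^{-sT_B/n}\leq 1/\sqrt{e}$ bound in the medium regime, and the union bound over $\leq 2\sqrt{n\log n}$ silent agents in the large regime are all precisely the paper's steps. For \eqref{eq:boost_bias} the paper works with unconditional expectations and uses the identity $\E[\boostblack]/\E[\boostwhite]=\sizeblack/\sizewhite$ directly with Corollary~\ref{cor:cb}'s $\pm\sqrt{\mu\log n}$ slack, whereas you propose conditioning on $\mathcal N$ and on the set of new speakers before applying Chernoff; these are equivalent routes, since $\E[\boostblack\mid\mathcal N]/\E[\boostwhite\mid\mathcal N]=\sizeblack/\sizewhite$ for every value of $\mathcal N$, so the same ratio-preservation drives both.

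One note on the ``main obstacle'' you flag: you are right that the deviation factor $\sqrt{9/(\newconstboost\sizewhite)}$ without an explicit $\log n$ only falls out of Chernoff when $\E[\boostwhite]\gtrsim\newconstboost\sizewhite\log n$, i.e.\ when $\mathcal N/s=\Omega(\newconstboost\log n)$, which is the small-$s$ regime. The paper splits the ratio argument into two cases and, for the other case, simply asserts it follows from Chernoff without elaboration (and in fact writes the case boundary as a condition on $\sizeblack/\sizewhite$ rather than on $\sizeblack+\sizewhite$, which makes the case division a bit opaque). So the gap you anticipate is real in the sense that the write-up is terse there, but your instinct about where the difficulty lies, and your plan of using the already-established lower bound on $\mathcal N$ before invoking Chernoff, is exactly what is needed to close the small-$s$ case, which is the substantive one; for the complementary case the claimed bound is much weaker relative to the available concentration.
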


\begin{proof}
    First, we prove \eqref{eq:boost_spread}. By using Fact \ref{fact:exp_ineq},
    we have
    \begin{align}
        \E\left[\boostblack+\sizewhite^{\left(1\right)}\right]
        & =\sizeblack+\sizewhite+\left(n-\sizeblack-\sizewhite\right)
            \left(1-\left(1-\frac{\sizeblack+\sizewhite}{n}\right)^{\newconstboost
            \log n}\right)\nonumber \\
        & \qquad\geq\sizeblack+\sizewhite+\left(n-\sizeblack-\sizewhite\right)
            \left(1
                -e^{-\frac{\sizeblack+\sizewhite}{n}\newconstboost \log n}\right).
        \label{eq:boost_before_cases}
    \end{align}
    We distinguish three cases.

    \textbf{Case $\sizeblack+\sizewhite<\frac{n}{2\newconstboost\log n}$.}
    By using Fact \ref{fact:exp_ineq} again, from \eqref{eq:boost_before_cases}
    we get
    \begin{align}
        \E\left[\boostblack+\sizewhite^{\left(1\right)}\right] 
        & \geq\sizeblack+\sizewhite+\left(n-\sizeblack-\sizewhite\right)
            \left(1-e^{-\frac{\sizeblack+\sizewhite}{n}\newconstboost\log
            n}\right)\nonumber \\
        & \geq\sizeblack+\sizewhite+\left(n-\sizeblack-\sizewhite\right)
            \frac{\frac{\sizeblack+\sizewhite}{n}\newconstboost\log n}
                {1+\frac{\sizeblack+\sizewhite}{n}\newconstboost\log n}\nonumber \\
        & \geq\sizeblack+\sizewhite+
            \left(n-\sizeblack-\sizewhite\right)
            {\frac{\sizeblack+\sizewhite}{n}\frac{\newconstboost}2 \log n}\nonumber \\
        & \geq\sizeblack+\sizewhite+\left(1-\frac{\sizeblack+\sizewhite}{2n}\right)
            \left(\sizeblack+\sizewhite\right)\frac{\newconstboost}2
        \log n\nonumber\\
        &\geq 
          \left(\sizeblack+\sizewhite\right)
          \left(1+\left(1-\frac{1}{4\newconstboost\log n}\right)\frac{\newconstboost}2\log n\right)\\
        &\geq \left(\sizeblack+\sizewhite\right)\frac{\newconstboost}2 \log n.
        \label{eq:exp_spread}
    \end{align}
    From the Chernoff bound (Lemma \ref{thm:gen_cb}), we thus get that
    w.h.p. 
    \[
    \boostblack+\sizewhite^{\left(1\right)}\geq  \left(\sizeblack+\sizewhite\right)\frac{\newconstboost}{3} \log n.
    \]

    \textbf{Case $\frac{n}{2\newconstboost\log n} \leq \sizeblack+\sizewhite\leq n-2\sqrt{n\log n}$.}
    From \eqref{eq:boost_before_cases}, we have
    \begin{align}
        \E\left[\boostblack+\sizewhite^{\left(1\right)}\right] 
        & \geq\sizeblack+\sizewhite+\left(n-\sizeblack-\sizewhite\right)
            \left(1-e^{-\frac{\sizeblack+\sizewhite}{n}\newconstboost}\right)\\
        & \geq\sizeblack+\sizewhite+\left(n-\sizeblack-\sizewhite\right)
            \left(1-\frac{1}{\sqrt{e}}\right)\\
        &\geq n\left(1-\frac{1}{\sqrt{e}}\right)+\frac{\sizeblack+\sizewhite}{\sqrt{e}}.
    \end{align}
    From the Chernoff bound (Lemma \ref{thm:gen_cb}), we thus get that
    w.h.p. 
    \[
    \boostblack+\sizewhite^{\left(1\right)}\geq n\left(1-\frac{1}{\sqrt{e}}\right)+\frac{\sizeblack+\sizewhite}{\sqrt{e}}-\sqrt{n\log n}.
    \]

    \textbf{Case $\boostblack+\boostwhite>n-2\sqrt{n\log n}$.} The probability
    that a silent agent does not observe a speaking one is 
    \[
    \left(\frac{n-\sizeblack-\sizewhite}{n}\right)^{\newconstboost \log n}\leq 
    \left(\frac{4\log n}{n}\right)^{\frac{1}{2}\newconstboost \log n},
    \]
    hence by a simple union bound it follows that w.h.p.
    all agents are speaking. 

    Now, we prove \eqref{eq:boost_bias}. As before, we have two cases.
    The first case, $\frac{\sizeblack}{\sizewhite}\geq\frac{n}{2\newconstboost\log n}$,
    is a simple consequence of the Chernoff bound (Lemma \ref{thm:gen_cb}). 

    \global\long\def\setboost{S_{boost}}

    In the second case, $\frac{\sizeblack}{\sizewhite}<\frac{n}{2\newconstboost\log n}$,
    let us consider the set of agents $\setboost$ that start speaking
    at the end of the boosting, i.e. that observe a speaking agent during
    the phase. Observe that $\left|\setboost\right|=\boostblack-\sizeblack+\sizewhite^{\left(1\right)}-\sizewhite$.
    The probability that an agent in $\setboost$ observes an agent in
    $\typeblack$ (resp. $\typewhite$) is $\frac{\sizeblack}{\sizeblack+\sizewhite}$
    (resp. $\frac{\sizewhite}{\sizeblack+\sizewhite}$). Thus
    \begin{align}
    \E\left[\boostblack\right] & =\sizeblack+\frac{\sizeblack}{\sizeblack+\sizewhite}\E\left[\left|\setboost\right|\right]\quad\mbox{and}\nonumber \\
    \E\left[\sizewhite^{\left(1\right)}\right] & =\sizewhite+\frac{\sizewhite}{\sizeblack+\sizewhite}\E\left[\left|\setboost\right|\right].\label{eq:exp_white_boost}
    \end{align}
    In particular 
    \begin{equation}
      \frac{\E\left[\boostblack\right]}{\E\left[\sizewhite^{\left(1\right)}\right]}=\frac{\sizeblack+\frac{\sizeblack}{\sizeblack+\sizewhite}\E\left[\left|\setboost\right|\right]}{\sizewhite+\frac{\sizewhite}{\sizeblack+\sizewhite}\E\left[\left|\setboost\right|\right]}=\frac{\sizeblack}{\sizewhite},\label{eq:exp-ratio}  
    \end{equation}
    and from \eqref{eq:exp_spread} and \eqref{eq:exp_white_boost} we have
    \begin{align}
        \E\left[\sizewhite^{\left(1\right)}\right]
        &\geq\frac{\sizewhite}{\sizeblack+\sizewhite}\E\left[\left|\setboost\right|\right]\\
        & =\frac{\sizewhite}{\sizeblack+\sizewhite}
            \left(\E\left[\boostblack+\sizewhite^{\left(1\right)}\right]
                -\left(\sizeblack+\sizewhite\right)\right)\\
        &\geq (1-o(1)) \frac{\sizewhite}{\sizeblack+\sizewhite}\frac{\constphase}{2}\left(\sizeblack+\sizewhite\right)\log n\\
        &= (1-o(1)) {\sizewhite}\frac{\constphase}{2} \log n,
        \label{eq:exp_white_boost_lower}
    \end{align}
    where the lower bound follows from the assumption $\frac{\sizeblack}{\sizewhite}<\frac{n}{2\constphase\log n}$ and
    \eqref{eq:exp_spread}.
    From \eqref{eq:exp_white_boost_lower} and the multiplicative form of the Chernoff bound (Corollary \ref{cor:cb}),
    we have that w.h.p. 
    \begin{align}
        & \boostblack  \geq\E\left[\boostblack\right]-\sqrt{\E\left[\boostblack\right]\log n}\quad\mbox{and}\\
        & \boostwhite  \leq\E\left[\boostwhite\right]+\sqrt{\E\left[\boostwhite\right]\log n}.\label{eq:exp_white_boost_whp}
    \end{align}

    Thus, since \eqref{eq:exp_white_boost} implies $\E\left[\boostblack\right]\geq \E\left[\boostwhite\right]$, we have
    \begin{align}
         \frac{\boostblack}{\boostwhite}  
         &   \geq\frac{\E\left[\boostblack\right]-\sqrt{\E\left[\boostblack\right]\log n}}
                {\E\left[\boostwhite\right]+\sqrt{\E\left[\boostwhite\right]\log n}}\\
        & =\frac{\E\left[\boostblack\right]}{\E\left[\boostwhite\right]}
            \cdot \frac{1-\sqrt{\frac{\log n}{\E\left[\boostblack\right]}}}
                {1+\sqrt{\frac{\log n}{\E\left[\boostwhite\right]}}}\\
        &\geq\frac{\E\left[\boostblack\right]}{\E\left[\boostwhite\right]}
            \cdot \left(1-\sqrt{\frac{\log n}{\E\left[\boostblack\right]}}
                -\sqrt{\frac{\log n}{\E\left[\boostwhite\right]}}\right) \\
        & \geq\frac{\E\left[\boostblack\right]}{\E\left[\boostwhite\right]}
            \cdot \left(1-2\sqrt{\frac{\log n}{\E\left[\boostwhite\right]}}\right) \\
        & =\frac{\sizeblack}{\sizewhite}
            \cdot\left(1-\sqrt{\frac{9}{\sizewhite\newconstboost}}\right),
        \label{eq:recurs_whp}
    \end{align}
    concluding the proof.
\end{proof}
\begin{lemma}
    At the end of the $i+1$th spreading phase, the following holds w.h.p.
    \begin{align}
        \nextblack+\nextwhite
        & \geq\left(\nowblack+\nowwhite\right) \frac{\constboost}3 \mbox{\textbf{1}}_{\left\{ \nowblack+\nowwhite<\frac{n}{2\newconstboost}\right\} }
        +\left(n\left(1-\frac{1}{\sqrt{e}}\right)+\frac{1}{\sqrt{e}}
            \left(\nowblack+\nowwhite\right)-\sqrt{n\log n}\right)\nonumber \\
        & \qquad \qquad \cdot\mbox{\textbf{1}}_{\left\{ \frac{n}{2\newconstboost}
            \leq\nowblack+\nowwhite\leq n-2\sqrt{n\log n}\right\} }
        +n\mbox{\textbf{1}}_{\left\{ \nowblack+\nowwhite>n-2\sqrt{n\log n}\right\} },
        \label{eq:increase_big}\\
         \frac{\nextblack}{\nextwhite}
         &   \geq\frac{\nowblack}{\nowwhite}\left(1- 4 \sqrt{\frac{\log n}
                {\newconstboost\nowwhite}}\right).
        \label{eq:rel_bias}
    \end{align}
\end{lemma}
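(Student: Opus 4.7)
The plan is to adapt the proof of Lemma \ref{lem:boosting_conc} to the spreading phase. The only substantive change comes from the fact that a spreading phase consists of $\newconstboost$ rounds rather than the $\lceil \newconstboost \log n \rceil$ rounds of the boosting phase. This shifts by a factor of $\log n$ the threshold separating the small and medium regimes in the case analysis and, since fewer silent agents become speakers per phase, weakens the Chernoff concentration, which accounts for the extra $\sqrt{\log n}$ factor in the error of \eqref{eq:rel_bias} compared to \eqref{eq:boost_bias}.

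First I would compute
\[
\mathbb{E}[\nextblack + \nextwhite] = \nowblack + \nowwhite + (n - \nowblack - \nowwhite)\left(1 - \left(1 - \frac{\nowblack + \nowwhite}{n}\right)^{\newconstboost}\right),
\]
split into the three cases of the lemma statement, and use Fact \ref{fact:exp_ineq} in each one, mirroring \eqref{eq:boost_before_cases}--\eqref{eq:exp_spread}. In the small regime $\nowblack + \nowwhite < \frac{n}{2\newconstboost}$, the bound $1 - e^{-x} \geq x/(1+x)$ gives $\mathbb{E}[\nextblack + \nextwhite] \geq (\nowblack+\nowwhite)\newconstboost/2$; in the medium regime the exponent $-(\nowblack+\nowwhite)\newconstboost/n$ is bounded above by $-1/2$, producing the $n(1 - 1/\sqrt{e}) + (\nowblack+\nowwhite)/\sqrt{e}$ term; in the large regime a direct union bound shows that every silent agent observes a speaker w.h.p. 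Concentration around these expectations is then obtained from the Chernoff bound (Lemma \ref{thm:gen_cb}) in the first two cases.

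For the ratio inequality \eqref{eq:rel_bias} I would follow the template of \eqref{eq:exp_white_boost}--\eqref{eq:recurs_whp}: let $S_{i+1}$ denote the set of agents that become speakers during phase $i+1$, so $|S_{i+1}| = \nextblack + \nextwhite - \nowblack - \nowwhite$. Each such agent independently adopts $\bitfirst = 1$ with probability $\nowblack/(\nowblack+\nowwhite)$, hence $\mathbb{E}[\nextblack]/\mathbb{E}[\nextwhite] = \nowblack/\nowwhite$ exactly, and applying the multiplicative Chernoff bound above to $\nextblack$ and below to $\nextwhite$ yields the claimed error term $4\sqrt{\log n/(\newconstboost \nowwhite)}$.

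The main obstacle will be controlling the ratio when $\nowwhite$ is small, since the relative Chernoff error is of order $1/\sqrt{\mathbb{E}[\nextwhite]}$. The saving grace is that this lemma is applied iteratively starting from the configuration produced by the boosting phase, after which Lemma \ref{lem:boosting_conc} ensures $\sizewhite^{(1)} = \Omega(\newconstboost \log n)$ whenever $\sizewhite \geq 1$; provided $\newconstboost$ is a sufficiently large constant, the $\Theta(\log n)$ iterations of \eqref{eq:rel_bias} together introduce only an $o(1)$ multiplicative loss in the ratio, preserving the initial majority gap needed by the polling phase.
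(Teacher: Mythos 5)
Your proposal is correct and matches the paper's proof: both adapt Lemma \ref{lem:boosting_conc} by dropping the $\log n$ factor from the phase length, bound $\E\left[\nextblack+\nextwhite\right]$ via Fact \ref{fact:exp_ineq} in the same three regimes, and obtain \eqref{eq:rel_bias} from the exact identity $\E\left[\nextblack\right]/\E\left[\nextwhite\right]=\nowblack/\nowwhite$ together with multiplicative Chernoff, relying on $\nowwhite=\Omega(\newconstboost\log n)$ after boosting so that the concentration goes through. One small imprecision in your closing remark (which concerns Corollary \ref{cor:ultimate}, not this lemma): the accumulated loss over the $\Theta(\log n)$ spreading phases is a constant factor $1-\Theta(1/\sqrt{\newconstboost})$, small because $\newconstboost$ can be chosen large, not $o(1)$ in $n$.
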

\begin{proof}
    The proof is almost the same as that of Lemma \ref{lem:boosting_conc}. Thus, 
    we here condense some analogous calculations.

    First, we prove \eqref{eq:increase_big}. By using Fact \ref{fact:exp_ineq},
    we have
    \begin{align}
        \E\left[\nextblack+\nextwhite\right]
         \geq\nowblack+\nowwhite+\left(n-\nowblack-\nowwhite\right)
            \left(1- e^{-\frac{\nowblack+\nowwhite}{n}\newconstboost}\right).
        \label{eq:boost_before_cases-1}
    \end{align}
    We distinguish three cases.

    \textbf{Case $\nowblack+\nowwhite<\frac{n}{2\newconstboost}$.} By using
    Fact \ref{fact:exp_ineq} again, from \eqref{eq:boost_before_cases-1}
    we get
    \begin{align}
        \E\left[\nextblack+\nextwhite\right]
        & \geq\nowblack+\nowwhite+\left(n-\nowblack-\nowwhite\right)\cdot
            \frac{\nowblack+\nowwhite}{2n}\newconstboost
        \geq \left(\nowblack+\nowwhite\right)\frac{\newconstboost}2.
        \label{eq:exp_spread-1}
    \end{align}
    After the boosting phase, i.e. for $i\geq 1$, it follows from Lemma \ref{lem:boosting_conc}
    that $\nowblack+\nowwhite = \Omega\left( \newconstboost \log n\right)$. From the Chernoff bound (Lemma \ref{thm:gen_cb}),
    if $\newconstboost$ is chosen big enough, we thus get that
    w.h.p. 
    \[
    \nextblack+\nextwhite\geq\left(\nowblack+\nowwhite\right)\frac \newconstboost 3.
    \]

    \textbf{Case $\frac{n}{2\newconstboost}\leq\nowblack+\nowwhite\leq n-2\sqrt{n\log n}$.}
    From \eqref{eq:boost_before_cases-1}, we have
    \begin{align}
        \E\left[\left(\nextblack+\nextwhite\right)\right]
        & \geq\nowblack+\nowwhite+\left(n-\nowblack-\nowwhite\right)\left(1-\frac{1}{\sqrt{e}}\right)
        \geq n\left(1-\frac{1}{\sqrt{e}}\right)+\frac{1}{\sqrt{e}}\left(\nowblack+\nowwhite\right).
    \end{align}
    From the Chernoff bound (Lemma \ref{thm:gen_cb}), we thus get that
    w.h.p. 
    \[
    \nextblack+\nextwhite\geq n\left(1-\frac{1}{\sqrt{e}}\right)+\frac{1}{\sqrt{e}}\left(\nowblack+\nowwhite\right)-\sqrt{n\log n}.
    \]

    \textbf{Case $\nowblack+\nowwhite>n-2\sqrt{n\log n}$.} The probability
    that a silent agent does not observe a speaking one is 
    \[
        \left(\frac{n-\nowblack-\nowwhite}{n}\right)^{\newconstboost}\leq \left(\frac{4\log n}{n}\right)^{\frac{1}{2}\newconstboost},
    \]
    hence by a simple union bound it follows that w.h.p.
    all agents are speaking. 

    Now, we prove \eqref{eq:rel_bias}. 
    As in the proof of \eqref{eq:boost_bias}, we have two cases. 
    The first case, $\frac{\sizeblack}{\sizewhite}\geq\frac{n}{2\newconstboost}$,
    is a simple consequence of the Chernoff bound (Lemma \ref{thm:gen_cb}). 
    Otherwise, let us assume $\frac{\sizeblack}{\sizewhite}<\frac{n}{2\newconstboost}$.
    With an analogous argument to that
    for \eqref{eq:exp_white_boost} and \eqref{eq:exp-ratio} we can prove 
    \begin{equation}
        \frac{\E\left[\nextblack\right]}{\E\left[\nextwhite\right]}  =\frac{\nowblack}{\nowwhite},\label{eq:expec_new_ratio}
    \end{equation}
    and
    \begin{align}
        & \E\left[\nextblack\right] 
        = \nowblack + \frac{\nowblack}{\nowblack+\nowblack}
             \E\left[\nextblack-\nowblack+\nextwhite-\nowwhite\right],\\
        & \E\left[\nextwhite\right]
        = \nowwhite + \frac{\nowwhite}{\nowblack+\nowwhite}
            \E\left[\nextblack-\nowblack+\nextwhite-\nowwhite\right].
        \label{eq:min_exp_next_white}
    \end{align}
     As in \eqref{eq:exp_white_boost_whp}, from the multiplicative form
    of the Chernoff bound (Corollary \ref{cor:cb}) we have that w.h.p.
    \begin{align}
        & \nextblack  \geq\E\left[\nextblack\right]-\sqrt{\E\left[\nextblack\right]\log n}\quad\mbox{and}\\
        & \nextwhite  \leq\E\left[\nextwhite\right]+\sqrt{\E\left[\nextwhite\right]\log n}.\label{eq:cb_newratios}
    \end{align}

    Thus, as in \eqref{eq:recurs_whp}, from \eqref{eq:cb_newratios} and \eqref{eq:expec_new_ratio}, we get
    \begin{align}
        \frac{\nextblack}{\nextwhite}
        & \geq\frac{\E\left[\nextblack\right]}{\E\left[\nextwhite\right]}
            \cdot \left(1-2\sqrt{\frac{\log n}{\E\left[\nextwhite\right]}}\right)
        \geq\frac{\nowblack}{\nowwhite} 
            \cdot \left(1- 4\sqrt{\frac{\log n}{\newconstboost\nowwhite}} \right),
    \end{align}
    where, as in \eqref{eq:exp_white_boost_lower}, in the last inequality we used that from \eqref{eq:exp_spread-1} and \eqref{eq:min_exp_next_white} it holds $ \E\left[\nextwhite\right] \geq \frac \newconstboost4 \nowwhite$. 
\end{proof}

From the previous two lemmas, we can derive the following corollary,
which concludes the proof. 
\begin{corollary}\label{cor:ultimate}
    If $\sizeblack \geq \sizewhite (1+\epsilon)$ for some constant
    $\epsilon>0$, then at the end of the last spreading phase it holds w.h.p.
    \begin{align}
        \sizeblack^{\left(1+2\log n\right)} 
        & =n-\sizewhite^{\left(1+2\log n\right)}
        \geq\sizewhite^{\left(1+2\log n\right)}\left(1+\trialbias\right),
        \label{eq:full_opinion}
    \end{align}
    where 
    $\trialbias= \frac {\epsilon} 2 -\frac{4}{\sqrt{\newconstboost}}$.
\end{corollary}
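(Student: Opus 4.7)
The two lemmas preceding this corollary provide, respectively, growth bounds for the number of speaking agents and degradation bounds for the ratio $\sizeblack/\sizewhite$ across a single phase. My plan is simply to iterate them across the boosting phase and the $\lceil 2\log n\rceil$ spreading phases, using a union bound over the $O(\log n)$ applications to preserve the w.h.p.\ guarantees.

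For the first equality in \eqref{eq:full_opinion}, namely that everyone is speaking, I would argue by tracking $\nowblack + \nowwhite$ through the three regimes of \eqref{eq:increase_big}. While $\nowblack + \nowwhite < n/(2\newconstboost)$, each spreading phase multiplies the total by at least $\newconstboost/3 \geq 2$. Boosting already produces $\Omega(\newconstboost \log n)$ speakers as soon as at least one source is present (first case of \eqref{eq:boost_spread}), so geometric growth reaches the medium regime within $O(\log n)$ spreading phases. In the medium regime, the second case of \eqref{eq:increase_big} eats a constant fraction of the non-speakers per phase, so $O(\log n)$ further phases bring the total above $n - 2\sqrt{n\log n}$, at which point the third case of \eqref{eq:increase_big} yields $\nextblack + \nextwhite = n$. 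The $\lceil 2\log n\rceil$ spreading phases comfortably suffice.

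For the bias bound, I would iterate \eqref{eq:boost_bias} and \eqref{eq:rel_bias} to get
\[
    \frac{\sizeblack^{(1+2\log n)}}{\sizewhite^{(1+2\log n)}} \geq \frac{\sizeblack}{\sizewhite}\left(1 - \sqrt{\frac{9}{\newconstboost\sizewhite}}\right) \prod_{j=1}^{2\log n} \left(1 - 4\sqrt{\frac{\log n}{\newconstboost\sizewhite^{(j)}}}\right),
\]
and then show the product on the right is at least $1 - O(1/\sqrt{\newconstboost})$. The case $\sizewhite = 0$ is trivial since then $\nowwhite = 0$ throughout. Otherwise, after boosting we have $\sizewhite^{(1)} = \Omega(\newconstboost \log n)$, and while we remain in the small regime $\nowwhite$ grows geometrically by a factor $\geq \newconstboost/3$. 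Consequently, $\sum_j \sqrt{\log n/(\newconstboost \sizewhite^{(j)})}$ is dominated by its first term and sums to $O(1/\sqrt{\newconstboost})$; once we leave the small regime, $\nowwhite = \Theta(n)$ and the remaining factors are $1 - O(\sqrt{\log n/n})$, which is negligible. Taking logarithms, using $\log(1-x) \geq -2x$ for small $x$, and exponentiating yields the desired bound on the product.

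Combining the two claims, and using that $\sizeblack/\sizewhite \geq 1+\epsilon$ together with $(1+\epsilon)(1 - 4/\sqrt{\newconstboost}) \geq 1 + \epsilon/2 - 4/\sqrt{\newconstboost}$ for $\newconstboost$ sufficiently large, we obtain \eqref{eq:full_opinion} with $\trialbias = \epsilon/2 - 4/\sqrt{\newconstboost}$. The main obstacle will be chasing the constants through the geometric series carefully enough to arrive at the specific bound $4/\sqrt{\newconstboost}$ rather than a larger universal constant; the boundary cases $\sizewhite \in \{0,1\}$ and the transition across the regime thresholds in \eqref{eq:increase_big} will also each require a brief separate check.
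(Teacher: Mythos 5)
Your proposal is correct and follows essentially the same route as the paper: iterate the two preceding lemmas over the boosting phase and the $\lceil 2\log n\rceil$ spreading phases, track $\nowblack+\nowwhite$ through the three regimes of \eqref{eq:increase_big} to conclude everyone speaks, telescope \eqref{eq:boost_bias} and \eqref{eq:rel_bias} into a product, and bound that product by $1-O(1/\sqrt{\newconstboost})$ using the geometric growth of $\nowwhite$. The only cosmetic differences are indexing of the product and your explicit handling of the $\sizewhite=0$ edge case and the final arithmetic $(1+\epsilon)(1-4/\sqrt{\newconstboost})\geq 1+\epsilon/2-4/\sqrt{\newconstboost}$, which the paper leaves implicit.
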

\begin{proof}
    We first show how the equality in \eqref{eq:full_opinion} follows from \eqref{eq:increase_big}. 
    When $\nowblack+\nowwhite<\frac{n}{2\constphase}$, equation 
    \eqref{eq:increase_big} shows that $\nowblack+\nowwhite$ increases by multiplicative a factor 
    $\constphase$ at the end of each spreading phase. 
    When $\frac{n}{2\constphase} \leq \nowblack+\nowwhite \leq n-2\sqrt{n\log n}$
    equation \eqref{eq:increase_big} shows that
    \begin{align}
        n- \nextblack -\nextwhite 
        & \leq \frac{ n-\nowblack-\nowwhite }{\sqrt{e}} - \sqrt{n \log n} 
        \leq \frac{n-\nowblack-\nowwhite }{\sqrt{e}}.
    \end{align}
    Hence the number of silent agents decreases by a factor $\sqrt{e}$ after each spreading phase.
    Lastly, when
    $\nowblack+\nowwhite>n-2\sqrt{n\log n}$, after one more spreading phase, a simple application of the  union bound shows that $\nextblack+\nextwhite$ is equal to $n$ w.h.p. As a consequence,
    if $\newconstboost$ is big enough, %
    after less than
    $ 1+ 2\log n $
    spreading phases w.h.p it holds that
    $
    \sizeblack^{\left(1+2\log n\right)} =n-\sizewhite^{\left(1+2\log n\right)}
    $. 

    The inequality in \eqref{eq:full_opinion} can be derived from \eqref{eq:rel_bias}, as
    follows. From \eqref{eq:boost_bias} and \eqref{eq:rel_bias} we have
    \begin{align}
        & \frac{\sizeblack^{\left(1+2\log n\right)}}{\sizewhite^{\left(1+2\log n\right)}} 
        \geq\frac{\sizeblack}{\sizewhite}
            \left(1-\sqrt{\frac{9}{\newconstboost\sizewhite}}\right)
            \prod_{i=2}^{1+2\log n}\left(1-\sqrt{\frac{16\log n}{\newconstboost\nowwhite}}\right).
        \label{eq:final_ratio}
    \end{align}
    We can estimate the product as
    \begin{align}
         \prod_{i=2}^{1+2\log n}\left(1-\sqrt{\frac{16\log n}{\newconstboost\nowwhite}}\right)
        &\geq\exp\left(- 4 \sum_{i=2}^{1+2\log n}\frac{1}
            {\left(\sqrt{\newconstboost}\right)^{i}}\right),\\
        & \geq\exp\left\{ 4 \left(1+\frac{1}{\sqrt{\newconstboost}}
            -\frac{1-\left(\newconstboost\right)^{-\frac{2+2\log n}{2}}}
                {1-\left(\newconstboost\right)^{-\frac{1}{2}}}\right)\right\}\\
        & \geq\exp\left\{- 4 \left(\frac{1}{\newconstboost-\sqrt{\newconstboost}}
            -n^{-\frac{2\log\newconstboost}{2}}\right)\right\}\\
        & \geq\left(1-\frac{5}{\newconstboost}\right),
        \label{eq:prod_lower}
    \end{align}
    where in the first and last inequality we used that $1-x\geq e^{-\frac{x}{1-x}}$
    if $\left|x\right|<1$.

    Finally, from \eqref{eq:final_ratio} and \eqref{eq:prod_lower} we
    get
    \begin{align}
        \frac{\sizeblack^{\left(1+2\log n\right)}}{\sizewhite^{\left(1+2\log n\right)}}
        & \geq\frac{\sizeblack}{\sizewhite}
            \left(1-\sqrt{\frac{9}{\newconstboost\sizewhite}}\right)
            \left(1-\frac{5}{\newconstboost}\right)
        \geq\frac{\sizeblack}{\sizewhite}\left(1-\frac{4}{\sqrt{\newconstboost}}\right),
    \end{align}
    which, together with the hypothesis 
    $\frac{\sizeblack}{\sizewhite} \geq 1+ \epsilon$,
    concludes the proof.
\end{proof}

\subsection{Proof of Theorem \ref{thm:multi}}\label{sub:proofofmulti}

\begin{proof}
    From Corollary \ref{cor:ultimate}, it follows that at the end of the last spreading phase, all agents have been informed. 
    After the last spreading phase, during the polling phase, each agent samples $\newconstboost \log n$ opinions from the \population and then adopts the majority of these as her output bit. 
    Thus, \eqref{eq:full_opinion} ensures that
    each sample 
    holds the correct opinion with probability
    $ \geq \frac{1}{2}+ \trialbias$. Hence, by the Chernoff bound and a union bound, if $\newconstboost$ is big enough then the majority of the $\newconstboost \log n$ samples corresponds to the correct value for all the $n$ agents w.h.p.

    The protocol obtained so far solves \majority \broadcast, but it does it using $4$ bits per message rather than $3$. Indeed, synchronizing a clock using \synclock takes $3$ bits, and we use an extra bit to execute \majprotgiven described in Section \ref{sec:majprot}.
    However, the protocol \majprot has the \independence property. This follows from Lemma \ref{lem:ipfact} with \prota$=$\synclock, \protb$=$\majprotgiven, \protc$=$\majprot, together with the observation that \majprotgiven is self-stabilizing.
    We can thus reduce the message length of \majprot  to $3$ bits using again \corelemma, with a time overhead of a factor $4$ only.
\end{proof}

\section{Conclusion and Open Problems}

This paper deals with  the construction of protocols in highly congested
stochastic interaction patterns. Corresponding challenges are particularly
evident when it is difficult to guarantee synchronization, which seems to be
essential for emulating a typical protocol that relies on many bits per message
with a protocol that uses fewer bits. Our paper shows that in the $\pull$
model, if a self-stabilizing protocol satisfies the \independence property then
it can be emulated with only 3 bits per message. Using this rather general
transformer, we solve the self-stabilizing  Clock-Synchronization and \majority
\broadcast problems in almost-logarithmic time and using only $3$ bits per
message. It remains an open problem whether the message size of either one of
these problems can be further reduced while keeping the running time
polylogarithmic. 

In particular, even for the self-stabilizing \broadcast problem (with a single
source) it remains open whether there exists a polylogarithmic protocol that
uses a single bit per interaction. 
In fact, we investigated several candidate protocols which seem promising in
experimental simulation, but appear to be out of reach of current techniques
for analysing randomly-interacting agent systems in a self-stabilizing context. 
Let us informally present one of them, called \textsc{BFS}\footnote{A similar
protocol was suggested during discussions with Bernhard Haeupler.}.
Let $\ell$,$ k\in \NN$ be two parameters, say of order $O(\log n)$. Agents can
be in $3$ states: \emph{boosting}, \emph{frozen} or \emph{sensitive}. 
Boosting agents behave as in the \major protocol: they apply the majority rule
to the $2$ values they see in a given round and make it into their opinion for
the next round. They also keep a counter $T$. If they have seen only agents of
a given color $b$ for $\ell $ rounds, they become sensitive to the opposite
value.
$b$-sensitive agents turn into frozen-$b$ agents if they see value $b$.
$b$-frozen agents keep the value $b$ for $k$ rounds before becoming boosters again.
Intuitively what we expect is that, from every configuration, at some point
almost all agents would be in the {boosting} state. Then, the boosting behavior
would lead the agents to converge to a value $b$ (which depends on the initial
conditions). Most agents would then become sensitive to $1-b$. If the source
has opinion $1-b$ then there should be a ``switch'' from $b$ to $1-b$. The
``frozen'' period is meant to allow for some delay in the times at which agents
become sensitive, and then flip their opinion. 

\medskip
{ {\paragraph{Acknowledgments:} The problem of self-stabilizing \broadcast was introduced through discussions with Ofer Feinerman. The authors are also thankful for Omer Angel, Bernhard Haeupler, Parag Chordia, Iordanis Kerenidis,  
Fabian Kuhn, Uri Feige, and Uri Zwick for helpful discussions regarding that problem.  The authors also thank Michele Borassi for his helpful suggestions regarding the Clock Synchronization problem.
}}

\bibliographystyle{abbrv}
\bibliography{biblio}

\appendix

\section{Technical Tools}
\label{apx:tech}

\begin{theorem}[\cite{MCD98}]
    \label{thm:gen_cb}
    Let $X_{1},...,X_{n}$ be $n$ independent random
    variables. If $X_{i}\leq M$ for each $i$, then
    \begin{align}
        ~~~~\Pr\left(\sum_{i}X_{i}\geq\E\left[\sum_{i}X_{i}\right]+\lambda\right)
        \leq e^{ -\frac{\lambda^{2}}
            {2\left(\sqrt{\sum_{i}\E\left[X_{i}^{2}\right]}+\frac{M\lambda}{3}\right)}}.
        \label{eq:gen_cb_upper}
    \end{align}
\end{theorem}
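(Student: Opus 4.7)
The plan is to apply the standard Chernoff--Bernstein exponential moment method. Let $S=\sum_i X_i$; after centering it suffices to bound $\Pr(S-\E S \geq \lambda)$ for $\lambda>0$. For any $t>0$, Markov's inequality applied to $e^{t(S-\E S)}$ gives
\[
    \Pr(S-\E S \geq \lambda) \;\leq\; e^{-t\lambda}\, \E\bigl[e^{t(S-\E S)}\bigr] \;=\; e^{-t\lambda}\,\prod_i \E\bigl[e^{t(X_i - \E X_i)}\bigr],
\]
where the factorization uses the independence of the $X_i$.

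Next I would bound each factor in the product using the elementary inequality $e^y - 1 - y \leq (y^2/2)/(1 - y/3)$, valid for $y<3$. Substituting $y = t(X_i - \E X_i)$ and using $X_i \leq M$ (so $y \leq tM$ for $t>0$), together with $\E[X_i - \E X_i]=0$ and $\log(1+u) \leq u$, this yields, for $0 < t < 3/M$,
\[
    \log \E\bigl[e^{t(X_i - \E X_i)}\bigr] \;\leq\; \frac{t^2\,\E[(X_i-\E X_i)^2]/2}{1 - tM/3} \;\leq\; \frac{t^2\,\E[X_i^2]/2}{1 - tM/3}.
\]
Summing over $i$ and writing $V = \sum_i \E[X_i^2]$, this gives
\[
    \Pr(S-\E S \geq \lambda) \;\leq\; \exp\!\left(-t\lambda + \frac{t^2 V/2}{1 - tM/3}\right).
\]

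Finally, I would optimize by choosing $t = \lambda/(V + M\lambda/3)$, which (after a short computation) makes the exponent equal to $-\lambda^2/\bigl(2(V + M\lambda/3)\bigr)$, the familiar Bernstein form. The main obstacle is essentially bookkeeping: verifying the exponential inequality (a routine power-series manipulation bounding $y^k/k!$ by $(y^2/2)(1/3)^{k-2}$ for $k\geq 2$) and performing the optimization cleanly enough that $V$ appears rather than $\operatorname{Var}$. I note that the statement as written has $\sqrt{\sum_i \E[X_i^2]}$ under the square root in the denominator, whereas the derivation above produces $\sum_i \E[X_i^2]$; this is presumably a typographical issue in the paper, and the standard Bernstein-type form is what is actually invoked in the applications (e.g.\ in the proofs of Lemmas~\ref{lem:boosting_conc} and the subsequent concentration arguments for the spreading phases).
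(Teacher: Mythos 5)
The paper does not prove Theorem~\ref{thm:gen_cb}; it cites it from McDiarmid's survey \cite{MCD98}, so there is no in-paper argument to compare against. Your derivation is the standard Chernoff--Bernstein exponential-moment argument and its skeleton (Markov on $e^{t(S-\E S)}$, product over independent factors, the power-series bound $e^y-1-y\le \tfrac{y^2/2}{1-y/3}$, optimization at $t=\lambda/(V+M\lambda/3)$) is sound and reaches the Bernstein form with $V=\sum_i\E[X_i^2]$. Your observation that the displayed bound is a typographical slip is correct, and in fact the extra $\sqrt{\cdot}$ makes the stated inequality \emph{false}: for $n$ i.i.d.\ fair Bernoullis it would give $\Pr(S\ge n/2+\sqrt{(n/2)\log n})\le e^{-\Theta(\sqrt{n\log n})}$, whereas the CLT gives $\Theta(1/(n\sqrt{\log n}))$. (Corollary~\ref{cor:cb} as stated inherits the same slip for $\mu\gg\log n$; the concentration actually needed downstream is $n^{-\Omega(1)}$, which the corrected Bernstein form does deliver with the same $\lambda=\Theta(\sqrt{\mu\log n})$.)

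One small technical gap in your proof: you apply the elementary inequality pointwise at $y=t(X_i-\E X_i)$ and justify $y\le tM$ from $X_i\le M$, but $X_i\le M$ only yields $X_i-\E X_i\le M$ when $\E X_i\ge 0$, which the theorem does not assume. The clean fix is the Bennett step: since $y\mapsto (e^{ty}-1-ty)/y^2$ is nondecreasing, $X_i\le M$ gives $e^{tX_i}\le 1+tX_i+X_i^2\cdot\tfrac{e^{tM}-1-tM}{M^2}$; taking expectations, multiplying by $e^{-t\E X_i}$, and using $\log(1+u)\le u$ yields $\log\E[e^{t(X_i-\E X_i)}]\le \E[X_i^2]\cdot\tfrac{e^{tM}-1-tM}{M^2}$, after which your polynomial bound is applied at the deterministic value $y=tM<3$. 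This also explains why the statement naturally carries the uncentered moment $\E[X_i^2]$ rather than the variance. For the paper's applications the $X_i$ are nonnegative indicator-type variables, so $\E X_i\ge0$ and your shortcut is harmless there.
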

\begin{corollary}
    \label{cor:cb} Let $\mu = \E\left[\sum_{i}X_{i}\right]$. If the $X_{i}$s are binary then, 
    for $\lambda=\sqrt{ \mu \log n}$ and sufficiently large $n$, 
    \eqref{eq:gen_cb_upper} gives
    \begin{align}
        \Pr\left(\sum_{i}X_{i}\geq \mu +\sqrt{\mu \log n}\right) & \leq e^{-\sqrt{ \mu \log n}},\\
        \Pr\left(\sum_{i}X_{i}\leq \mu -\sqrt{ \mu \log n}\right) & \leq e^{-\sqrt{ \mu \log n}}.
    \end{align}
\end{corollary}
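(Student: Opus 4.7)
My plan is to derive Corollary \ref{cor:cb} as an immediate specialization of Theorem \ref{thm:gen_cb}. For the upper-tail bound I would substitute the given choice $\lambda = \sqrt{\mu \log n}$ and exploit the two simplifications available in the binary setting: first, that $M = 1$ is a valid almost-sure upper bound on each $X_i$; and second, that $\E[X_i^2] = \E[X_i]$, so that $\sum_i \E[X_i^2] = \mu$ and hence $\sqrt{\sum_i \E[X_i^2]} = \sqrt{\mu}$ inside the denominator of the exponent in \eqref{eq:gen_cb_upper}.

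After these substitutions the exponent becomes
\[
    \frac{\mu \log n}{2\bigl(\sqrt{\mu} + \tfrac{1}{3}\sqrt{\mu \log n}\bigr)}.
\]
I would then observe that for sufficiently large $n$ the term $\tfrac{1}{3}\sqrt{\log n}$ dominates the constant $1$, so up to an absolute constant the denominator is bounded above by $\sqrt{\mu \log n}$; the exponent is therefore at least $\sqrt{\mu \log n}$, yielding the first inequality. If an explicit threshold is desired, requiring $\sqrt{\log n} \geq 6$ makes the denominator no larger than $\tfrac{1}{2}\sqrt{\mu \log n}$ and gives the claim with room to spare.

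For the lower-tail bound I would apply Theorem \ref{thm:gen_cb} to the variables $Y_i = -X_i$. These satisfy $Y_i \leq 0 \leq 1 = M$ and share the same second moments $\E[Y_i^2] = \E[X_i^2]$, so the identical computation of the exponent goes through unchanged. Since the event $\sum_i Y_i \geq \E\bigl[\sum_i Y_i\bigr] + \lambda$ is exactly $\sum_i X_i \leq \mu - \lambda$, the second inequality follows at once.

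There is no real obstacle here; the argument is almost entirely book-keeping. The only point that requires the "sufficiently large $n$" qualifier is the step where $\tfrac{1}{3}\sqrt{\log n}$ is made to dominate the contribution of $\sqrt{\mu}$ in the denominator, and this is unconditionally available since $\sqrt{\log n}$ diverges. The statement is trivial when $\mu = 0$, and for any positive $\mu$ the bound holds once $n$ is large enough that $\sqrt{\log n}$ exceeds an absolute constant independent of $\mu$.
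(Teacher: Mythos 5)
Your proof is correct and takes essentially the same route as the paper: substitute $\lambda=\sqrt{\mu\log n}$, use $M=1$ and $\sum_i\E[X_i^2]=\mu$ for binary variables, and observe that once $\sqrt{\log n}\geq 6$ the denominator $2\bigl(\sqrt{\mu}+\tfrac13\sqrt{\mu\log n}\bigr)$ drops below $\sqrt{\mu\log n}$. The paper's proof is terser (it only notes $\sum_i\E[X_i^2]\leq\sum_i\E[X_i]$ and leaves the lower tail implicit), whereas you also make explicit the standard negation $Y_i=-X_i$ for the lower tail; this is fine, and in fact taking $M=0$ for the $Y_i$ would give an even stronger lower-tail bound, though $M=1$ suffices as you argue.
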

\begin{proof}
    The fact that the $X_{i}$s are binary implies that
    $\sum_{i}\E\left[X_{i}^{2}\right]\leq\sum_{i}\E\left[X_{i}\right]$.
    By setting $\lambda=\sqrt{\E\left[\sum_{i}X_{i}\right]\log n}$, one can
    show that the l.h.s. of \eqref{eq:gen_cb_upper} is upper bounded by
    $e^{-\sqrt{ \mu \log n}}$. 
\end{proof}

\section{Proof of Lemma \ref{lem:gt}}
\begin{lemma}\label{lem:gt}
    Let $f,g: \RR_{+} \rightarrow \RR$ be functions defined by $f(x) =  \lceil \log x \rceil +1$ and 
    \begin{align}
        \finaliter(x) = \inf \left\{k \in \NN \mid f^{\circledast k}(x) \leq 3 \right\},
    \end{align}
    where we denote by $f^{\circledast k}$ the $k$-fold iteration of $f$.
    It holds that
    \begin{align}
        \finaliter(T) \leq \log^{\circledast 4 }T + \bigO(1).
    \end{align}
\end{lemma}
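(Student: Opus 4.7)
The plan is to split the $\tau(T)$ iterations of $f$ into two stages. Stage one consists of the first four iterations, which I claim bring the argument down to something of size $\log^{\circledast 4}(T) + O(1)$. Stage two consists of the remaining iterations, which I bound using the trivial fact that once the argument is an integer $\geq 4$, each application of $f$ decreases it by at least one.

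For stage one, I would first record the elementary estimate $f(x) = \lceil \log x \rceil + 1 \leq \log x + 2$ for every $x \geq 1$, together with the observation that $\log(y + C) \leq \log y + 1$ whenever $y \geq C$. Using these, I would prove by induction on $k \in \{1,2,3,4\}$ that there is a constant $C_k$ such that
\[
f^{\circledast k}(T) \;\leq\; \log^{\circledast k}(T) + C_k
\]
for every $T$ large enough that $\log^{\circledast k}(T)$ is well-defined and at least $C_k$. The induction step reads
\[
f^{\circledast k+1}(T) \;\leq\; \log\bigl(f^{\circledast k}(T)\bigr) + 2 \;\leq\; \log\bigl(\log^{\circledast k}(T) + C_k\bigr) + 2 \;\leq\; \log^{\circledast k+1}(T) + C_{k+1},
\]
where the last step uses the induction hypothesis $\log^{\circledast k}(T) \geq C_k$. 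For small $T$ (below the threshold that makes the induction kick in for $k=4$, say $T \leq 2^{2^{8}}$), a direct computation shows $f^{\circledast 4}(T)$ is bounded by an absolute constant, which is absorbed into the $O(1)$ of the final bound.

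For stage two, a direct check shows $\lceil \log y \rceil \leq y - 2$ for every integer $y \geq 4$, and hence $f(y) \leq y - 1$ for $y \geq 4$. Iterating this inequality until the value drops to $3$ gives $\tau(y) \leq \max(0, y - 3)$ for every positive integer $y$.

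Combining the two stages, I obtain
\[
\tau(T) \;\leq\; 4 + \tau\bigl(f^{\circledast 4}(T)\bigr) \;\leq\; 4 + \bigl(f^{\circledast 4}(T) - 3\bigr) \;\leq\; \log^{\circledast 4}(T) + C_4 + 1 \;=\; \log^{\circledast 4}(T) + O(1),
\]
which is the claimed bound. The only real obstacle is bookkeeping: carefully tracking the additive constants $C_k$ through the stage-one induction and checking that the domain condition $\log^{\circledast k}(T) \geq C_k$ is maintained — no deep idea is needed, since the bound is very loose (the true growth of $\tau$ is roughly $\log^\star T$).
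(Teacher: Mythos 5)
Your proposal is correct and follows essentially the same two-stage decomposition as the paper: peel off the first four iterations, bound $f^{\circledast 4}(T)\leq \log^{\circledast 4} T + \bigO(1)$, and then bound the remaining iterations by the trivial observation that each application of $f$ decreases an integer argument $\geq 4$ by at least one. In fact your write-up is somewhat more careful than the paper's, which asserts $f^{\circledast 4}(T)\leq \log^{\circledast 4} T + \bigO(1)$ without the short induction you spell out.
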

\begin{proof}
    We can notice that
    $ 
        f(T) \leq T -1, 
    $
    if $T$ is bigger than some constant $c$.  Moreover, when $f(x) \leq c$, the
    number of iterations before reaching $1$ is $O(1)$. This implies that
    $
        \finaliter(T) \leq T + \bigO(1).
    $
    But in fact, by definition, $\ell(T) = g\left(f^{\circledast 4}(T)\right) +4$ (provided $f^{\circledast 4}(T) > 1$, which holds if $T$ is big enough). Hence
    \begin{align}
        \finaliter(T) 
        \leq g\left(f^{\circledast 4}(T)\right) +4
        \leq f^{\circledast 4}(T) + \bigO(1) \leq \log^{\circledast 4} T + \bigO(1).
    \end{align}
\end{proof}

\end{document}